\documentclass[runningheads]{llncs}
\usepackage[T1]{fontenc}
\usepackage[utf8]{inputenc}
\usepackage{amsmath,amssymb,amsfonts,mathtools}
\usepackage{booktabs}
\usepackage{xspace}
\usepackage{graphicx}
\usepackage[hidelinks]{hyperref}
\usepackage[nameinlink,noabbrev]{cleveref}
\usepackage{orcidlink}

\newcommand\ma[1]{\ensuremath{\mathcal{#1}}}
\newcommand\mb[1]{\ensuremath{\mathbb{#1}}}

\newcommand\mf[1]{\ensuremath{\mathfrak{#1}}}

\newcommand{\PiP}{\Pi}
\newcommand{\D}{\ma D}
\newcommand{\Dx}{\ma D^{\mathsf{x}}}
\newcommand{\Dn}{\ma D^{\mathsf{n}}}
\newcommand{\EDB}{\ensuremath{\Sigma_{\ma E}}}
\newcommand{\IDB}{\ensuremath{\Sigma_{\ma I}}}
\newcommand{\adom}{\mathsf{adom}}

\newcommand{\HBIDB}{\ensuremath{\mf{B}^{\IDB}}}
\newcommand{\trnum}{\operatorname{tr}}
\newcommand{\Supp}{\mb S}
\newcommand{\Hyp}{\mathcal{H}}
\newcommand{\TC}{\Pi_{\mathsf{tc}}}
\newcommand{\Path}{\mathsf{Path}}
\newcommand{\Edge}{\mathsf{Edge}}
\newcommand{\Goal}{\mathsf{Goal}}
\newcommand{\Inf}{\infty}

\newcommand{\UCQ}{\mathsf{UCQ}}
\newcommand{\Destroy}{\mathsf{Destroy}}
\begin{document}
\title{Causality and Minimal Supports in Recursive
Datalog \thanks{To appear in the Proceedings of the 10th International Joint Conference on Rules and Reasoning (RuleML+RR 2026), LNCS, Vilnius, Lithuania.}}
\titlerunning{Causality and Supports in Datalog}
%
\author{Ratan Bahadur Thapa\inst{1}\thanks{Corresponding author: ratan.thapa@ki.uni-stuttgart.de} \orcidlink{0009-0000-2368-5928} \and
Steffen Staab\inst{1,2}\orcidlink{0000-0002-0780-4154} }

\authorrunning{R. B. Thapa and S. Staab}

%
\institute{Institute for Artificial Intelligence, University of Stuttgart,  Germany
\and
 University of Southampton, United Kingdom}
\maketitle  
\begin{abstract}
Explaining an inferred fact under rule evaluation can require identifying the inclusion-minimal input sets that suffice for the inference and the deletions that make the fact disappear. For a fixed union of conjunctive queries, every minimal support is bounded by the query body. For recursive rules, the same answer may depend on large supports, and the number of minimal supports may be exponential in the input. We study the gap through deletion-based explanation, using inclusion-minimal endogenous
input facts that entail the atom together with fixed background facts. We organize these supports as a hypergraph and prove that it determines actual causes, counterfactual causes, responsibility, and deletion robustness. The resulting view separates  nonrecursive queries from recursive
Datalog at the level of minimal input explanations. For positive-length
reachability, minimal supports are exactly simple directed paths, and deletion robustness is the minimum directed edge cut. We also prove invariance under fixed-goal equivalent positive Datalog programs and an NP-hardness calibration for the robustness threshold problem.
\end{abstract}

\section{Introduction}
When a database query returns a surprising answer, the natural explanatory
question is not only how the answer was derived, but also which input facts keep
the answer true and which deletions would make it disappear. Lineage and
provenance address the first part by
describing how an answer can be derived \cite{buneman2001and,green2007provenance}. 
Query-answer causality addresses the deletion part of the explanation
\cite{meliou2010causality,meliou2010complexity}. It distinguishes between
exogenous facts, which form the fixed background context, and endogenous facts,
which are allowed to be deleted.
An endogenous fact is an actual cause if
deleting it, possibly after deleting other endogenous facts while preserving the
answer, makes the answer false. We study the same
deletion-based questions for positive Datalog entailments, where answers are
derived by a least-fixpoint computation rather than by a single nonrecursive
query \cite{abiteboul1995foundations}.

The question is important because recursive rule languages are central in deductive databases, rule-based knowledge representation, and ontology-style reasoning, and answers derived using them often need to be explained in terms
of the input facts on which they depend. Consider a directed graph encoded by facts $\Edge(a,b)$. Throughout the paper,
let $\TC$ be the positive-length transitive-closure program with rules
\[\Path(x,y) \leftarrow \Edge(x,y),\quad
\Path(x,y) \leftarrow \Edge(x,z), \Path(z,y)\]
 deriving $\Path(x,y)$ from $\Edge(x,y)$, and deriving $\Path(x,y)$ from $\Edge(x,z)$ and $\Path(z,y)$. If several independent chains of rule application support the entailment of $\Path(s,t)$, then the answer survives exactly when at least one chain remains after deletion. Thus, the relevant explanation is not one derivation tree, but the family of minimal endogenous edge sets that still entail $\Path(s,t)$, in the sense of database witnesses, why-provenance, and lineage \cite{buneman2001and,green2007provenance}. We organize this family as a hypergraph whose vertices are deletable facts and whose hyperedges are minimal sufficient sets; we call it the support hypergraph. 

We follow database-causality terminology \cite{meliou2010causality,meliou2010complexity}: a \emph{counterfactual cause} is a
fact whose deletion alone falsifies the answer; the counterfactual refers to
the deletion test, not to the fact itself. An endogenous support is a set of endogenous facts that entails the answer
together with the fixed background facts; it is inclusion-minimal, or simply
minimal, if no proper subset has the same property. A fact is an actual cause if it belongs to some such support, and it is a counterfactual cause if it belongs to every support. The robustness radius is the minimum number of deletions needed to intersect all supports; responsibility is measured by the size of a smallest contingency deletion that makes a fact counterfactual cause.
\begin{example}\label{example_intro}
Consider a graph database in which all edge facts are endogenous, i.e., all of them may be deleted:
\[ e_1=\Edge(s,a),  e_2=\Edge(a,t), e_3=\Edge(s,b),
  e_4=\Edge(b,t), e_5=\Edge(a,b).\]
For $\TC$ and the answer $\Path(s,t)$, the minimal endogenous supports are
\[S_1=\{e_1,e_2\},\quad
  S_2=\{e_3,e_4\}\quad\text{and}\quad
  S_3=\{e_1,e_5,e_4\}.\]
Thus, the support hypergraph has vertices $e_1,\ldots,e_5$ and hyperedges $S_1,S_2,S_3$. Every edge fact is an actual cause, because each occurs in some support. No edge fact is a counterfactual cause, because no edge occurs in all three supports. The deletion robustness radius is $2$: one deletion cannot hit all three supports, while $\{e_1,e_4\}$ hits $S_1,S_2,S_3$ and destroys all $s$-$t$ paths, i.e., $\Path(s,t)$. The same hypergraph also determines responsibility: under the standard causality convention, a fact with smallest contingency size $k$ has responsibility
$1/(k+1)$. For instance, $e_1$ has a one-fact contingency, such as $\{e_3\}$, so its responsibility is $1/2$. By contrast, $e_5$ occurs only in $S_3$; preserving $S_3$ while hitting the supports avoiding $e_5$ requires deleting both $e_2$ and $e_3$, so its responsibility is $1/3$.
The example shows why we analyze
minimal input supports rather than derivation multiplicity: deletion behavior
depends on which fact sets still entail the answer, not on how many proof trees
derive it.
\end{example}

The main difficulty is that recursion breaks the bounded-witness behavior familiar from conjunctive queries (CQs). For a fixed CQ, each valuation uses only the atoms in the query body; therefore, every minimal endogenous support has bounded cardinality. A fixed union of CQs (UCQs) consequently has only polynomially many minimal supports in the database size. Recursive Datalog behaves differently. Already for the fixed program $\TC$, a minimal support for $\Path(s,t)$ can be a simple path whose length grows with the database, and a linear-size graph can contain exponentially many simple $s$-$t$ paths. Enumerating proof trees is not an adequate substitute: recursive programs may generate many derivations, including cyclic proof-tree unfoldings, whereas deletion robustness depends only on which endogenous facts intersect all minimal supports.

Prior work provides the causal and provenance foundations but does not isolate this support-level structure for recursive Datalog. Query-answer causality defines actual causes, counterfactual causes, contingencies, and responsibility for database answers \cite{meliou2010causality,meliou2010complexity}, building on structural-model causality and responsibility \cite{halpern2005causes,chockler2004responsibility}. 
Provenance and lineage describe witnesses, annotations, circuits, and derivations \cite{buneman2001and,green2007provenance,deutch2014circuits}. Resilience, deletion propagation, and how-to queries study related tuple-deletion or update problems, particularly for CQs and related nonrecursive classes \cite{salimi2015query,salimi2016causes,bertossi2017causes,freire2015complexity}.
Our focus is different: for positive Datalog entailment, we study its
inclusion-minimal deletable input supports. The resulting support hypergraph
captures causality, responsibility, and deletion robustness, separates fixed
UCQs from recursive Datalog by support structure, and distinguishes input
supports from derivation multiplicity.

Our main contribution is the isolation of the minimal-support structure
underlying deletion-based explanation for positive Datalog entailments. We: (i) formalize the support hypergraph of a derived atom over a database whose facts are divided into fixed background facts and deletable facts; (ii) prove that this support hypergraph determines actual causes, counterfactual causes, responsibility, and deletion robustness; (iii) then show that recursion fundamentally changes the support structure: fixed UCQs have bounded support cardinality and polynomially many minimal supports, while positive-length transitive closure can generate unbounded-size and exponentially many minimal supports;
and (iv) finally show that support hypergraphs depend on the goal semantics, not on
the chosen positive Datalog formulation: fixed-goal equivalent programs induce
the same minimal supports. For all-endogenous reachability with distinct
endpoints, deletion robustness becomes exactly the minimum directed
$s$-$t$ edge cut. We provide complete proofs of all stated results in the extended version~\cite{thapa2026causalityminimalsupportsrecursive}.

\section{Positive Datalog}
A finite relational signature $\Sigma$ is partitioned into extensional predicates $\EDB$ and intensional predicates $\IDB$, each with fixed arity. We allow nullary intensional predicates for Boolean Datalog queries. Constants are drawn from a countably infinite set, and terms are variables or constants. A database instance $\D$ is a finite set of ground atoms over $\EDB$. A positive Datalog rule is an expression
\[ H(\bar t) \leftarrow B_1(\bar t_1),\ldots,B_m(\bar t_m),\]
where $m\geq 0$, $H$ is an intensional predicate symbol of arity $k$, each
$B_i$ is a predicate symbol of arity $k_i$ over $\EDB \cup \IDB$, $\bar t$ is
a $k$-tuple of terms, each $\bar t_i$ is a $k_i$-tuple of terms, and every
variable occurring in the rule occurs in the body.

A positive Datalog program $\PiP$ is a finite set of such function-free rules. 
We use active-domain grounding. Let $\adom(\Pi,\mathcal D)$ be the set of
constants occurring in $\Pi$ or $\mathcal D$. The intensional Herbrand base
$\HBIDB_{\PiP,\D}$ is the finite set of all ground atoms
$P(\bar a)$ such that $P \in \Sigma_I$ and every constant in $\bar a$ belongs to
$\adom(\Pi,\mathcal D)$, i.e.,
\[
\HBIDB_{\PiP,\D}
=\{\,P(c_1,\ldots,c_k)
\mid P \in \IDB,\ \operatorname{arity}(P)=k,\ 
c_1,\ldots,c_k \in \adom(\PiP,\D)\,\}.
\]

The immediate-consequence operator $T_{\PiP,\D}$ maps a set
$I\subseteq \HBIDB_{\PiP,\D}$ of derived intensional atoms to the intensional
atoms obtained by one rule application over $\D\cup I$. Since $\PiP$ is
positive, $T_{\PiP,\D}$ is monotone. Starting with $I_0=\emptyset$, the Kleene sequence
$I_{i+1}=I_i\cup T_{\PiP,\D}(I_i)$ is increasing and reaches a least fixpoint
after finitely many steps, because $\HBIDB_{\PiP,\D}$ is finite. We write
$\PiP\cup\D\models A$ if $A$ is an extensional fact in $\D$ or an intensional
fact in that least fixpoint ~\cite{abiteboul1995foundations}. Entailment is always evaluated over the active domain of the program and the
retained extensional instance; atoms using constants outside that domain are not
entailed.

\subsection{Endogenous Tuples}

Database causality partitions input tuples into endogenous tuples, which may be candidate causes, and exogenous tuples, which form the background context \cite{meliou2010causality,meliou2010complexity}. 

\begin{definition}
\label{def:endogenous-instance}
Let $\PiP$ be a positive Datalog program over $\EDB\cup\IDB$. A partitioned instance for $\PiP$ is a triple $(\Dx,\Dn,A)$, where $\Dx$ and $\Dn$ are disjoint finite sets of ground atoms over $\EDB$, and $A$ is a ground atom over $\EDB\cup\IDB$ whose constants belong to $\adom(\PiP,\Dx\cup\Dn)$. We call $\Dx$ the exogenous database, $\Dn$ the endogenous database, and $\D=\Dx\cup\Dn$ the input database. Deletions range only over $\Dn$.
\end{definition}

For $\Delta\subseteq \Dn$, we write $\D\setminus\Delta$ as shorthand for $\Dx\cup(\Dn\setminus\Delta)$. Thus, a deletion removes only endogenous facts and leaves the exogenous database unchanged.

\subsection{Endogenous Supports}
\label{sec:supports}
We now isolate the inclusion-minimal endogenous fact sets sufficient for entailment. The resulting support hypergraph is the abstraction used for causality and robustness.

\begin{definition}
\label{def:support}
Let $\PiP$ be a positive Datalog program, let $(\Dx,\Dn,A)$ be a partitioned instance for $\PiP$, let $\D=\Dx\cup\Dn$, and assume $\PiP\cup\D\models A$. A set $S\subseteq\Dn$ is an endogenous support for $A$ if
\[
  \PiP\cup\Dx\cup S\models A.
\]
It is a minimal endogenous support if no proper subset $S'\subsetneq S$ satisfies $\PiP\cup\Dx\cup S'\models A$. 
When the partition $\D=\Dx\cup\Dn$ is clear from the partitioned instance, the family of all \emph{minimal endogenous supports} is
\[\Supp_{\PiP,\D,A} =
  \{\,S\subseteq\Dn \mid
      \PiP\cup\Dx\cup S\models A
      \text{ and }
      \forall S'\subsetneq S,\ 
      \PiP\cup\Dx\cup S'\not\models A
    \,\}.\]
The hypergraph of minimal supports for $A$ is the finite hypergraph
\[
  \Hyp_{\PiP,\D,A}
  =
  (\Dn,\Supp_{\PiP,\D,A}),\]
where vertices are the endogenous facts $\Dn$ and  hyperedges are the minimal endogenous supports for $A$.
\end{definition}

If $\PiP\cup\Dx\models A$, then $\emptyset$ is an endogenous support and the unique minimal endogenous support,  i.e., $\Supp_{\PiP,\D,A}=\{\emptyset\}$. In this exogenous-entailment case, no deletion of endogenous facts can destroy $A$, no endogenous fact is an actual or counterfactual cause, and the deletion robustness radius is $\infty$.

\begin{proposition}
\label{prop:decomposition}
Let $\PiP$ be a positive Datalog program, let $(\Dx,\Dn,A)$ be a partitioned instance for $\PiP$, let $\D=\Dx\cup\Dn$, and assume $\PiP\cup\D\models A$. Then, $\Supp_{\PiP,\D,A}$ is a finite antichain under set inclusion, and for every $E\subseteq\Dn$,
\[
  \PiP\cup\Dx\cup E\models A
  \quad\Longleftrightarrow\quad
  \exists S\in\Supp_{\PiP,\D,A}\text{ such that }S\subseteq E.
\]
\end{proposition}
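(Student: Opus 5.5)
The argument needs only two facts: $\Dn$ is finite, and entailment in positive Datalog is monotone in the extensional database. The latter needs a little care because of active-domain grounding. If $\Dx\cup E\subseteq\Dx\cup E'$, then $\adom(\PiP,\Dx\cup E)\subseteq\adom(\PiP,\Dx\cup E')$. Hence every ground rule instance over the smaller domain is also a ground rule instance over the larger one. By induction along the Kleene sequence, each stage $I_i$ computed over $\Dx\cup E$ is contained in the corresponding stage over $\Dx\cup E'$, using the monotonicity of $T_{\PiP,\D}$ noted above. So the least fixpoint grows with the database, and extensional facts in $\Dx\cup E$ remain in $\Dx\cup E'$. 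Therefore $\PiP\cup\Dx\cup E\models A$ and $E\subseteq E'\subseteq\Dn$ imply $\PiP\cup\Dx\cup E'\models A$. I regard this monotonicity lemma, with the active-domain detail, as the only step that needs genuine checking; everything else is combinatorial.

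\emph{Finiteness and the antichain property.} $\Supp_{\PiP,\D,A}$ is a subset of the power set of the finite set $\Dn$, so it is finite. If $S,S'\in\Supp_{\PiP,\D,A}$ with $S\subsetneq S'$, then $S$ is a proper subset of $S'$ that entails $A$ together with $\Dx$. This contradicts the minimality of $S'$, so no two members are comparable.

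\emph{The equivalence.} For ($\Leftarrow$), take $S\in\Supp_{\PiP,\D,A}$ with $S\subseteq E$. By definition $\PiP\cup\Dx\cup S\models A$, and monotonicity gives $\PiP\cup\Dx\cup E\models A$. For ($\Rightarrow$), assume $\PiP\cup\Dx\cup E\models A$. The family $\mathcal F=\{S\subseteq E\mid \PiP\cup\Dx\cup S\models A\}$ contains $E$, so it is nonempty, and it is finite. Choose $S\in\mathcal F$ of minimum cardinality. Any proper subset $S'\subsetneq S$ also lies inside $E$, so it cannot entail $A$, since otherwise it would belong to $\mathcal F$ with smaller cardinality. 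Thus $S$ is a minimal endogenous support, so $S\in\Supp_{\PiP,\D,A}$ and $S\subseteq E$.

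This proof covers the degenerate case $\PiP\cup\Dx\models A$, where the family is $\{\emptyset\}$. It also shows that $\Supp_{\PiP,\D,A}\neq\emptyset$: taking $E=\Dn$ is allowed because $\PiP\cup\D\models A$ by hypothesis.
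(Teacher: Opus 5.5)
Your proof is correct and follows the paper's argument essentially step for step: finiteness from $\Supp_{\PiP,\D,A}\subseteq 2^{\Dn}$, the antichain property from minimality, ($\Leftarrow$) via Kleene-stage inclusion using the active-domain containment, and ($\Rightarrow$) by extracting a minimal element from the finite family of supporting subsets of $E$. The only cosmetic difference is that you pick a member of minimum cardinality where the paper picks an inclusion-minimal one, and both choices work equally well.
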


\begin{proof}
The family $\Supp_{\PiP,\D,A}$ is a subset of $2^{\Dn}$. Since $\Dn$ is finite, $\Supp_{\PiP,\D,A}$ is finite. If $S_1,S_2\in\Supp_{\PiP,\D,A}$ and $S_1\subsetneq S_2$, then $S_2$ is not inclusion-minimal among supports, because $S_1$ is a proper supporting subset. Thus, $\Supp_{\PiP,\D,A}$ is an antichain.

Let $E\subseteq\Dn$. Assume first that $\PiP\cup\Dx\cup E\models A$. The finite set
$\mathcal{F}_E=\{\,F\subseteq E : \PiP\cup\Dx\cup F\models A\,\}$
contains $E$. Choose an inclusion-minimal member $S$ of $\mathcal{F}_E$. If some $S'\subsetneq S$ satisfied $\PiP\cup\Dx\cup S'\models A$, then $S'\in\mathcal{F}_E$, contradicting the choice of $S$. Thus, $S\in\Supp_{\PiP,\D,A}$ and $S\subseteq E$.

Conversely, assume that $S\in\Supp_{\PiP,\D,A}$ and $S\subseteq E$. If $A$ is extensional, then $A\in\Dx\cup S$. Since $S\subseteq E$, we have $A\in\Dx\cup E$, and thus $\PiP\cup\Dx\cup E\models A$. If $A$ is intensional, let $(I_i)_{i\geq 0}$ and $(J_i)_{i\geq 0}$ be the Kleene sequences for $\PiP$ over $\Dx\cup S$ and $\Dx\cup E$, respectively. Since $S\subseteq E$, induction on $i$ gives $I_i\subseteq J_i$: the base case is empty, and every ground rule instance whose body is true in $(\Dx\cup S)\cup I_i$ is also true in $(\Dx\cup E)\cup J_i$. Therefore, the least fixpoint over $\Dx\cup S$ is contained in the least fixpoint over $\Dx\cup E$, so $\PiP\cup\Dx\cup E\models A$.
\end{proof}

Proposition~\ref{prop:decomposition} is the semantic basis for the paper: grounded atom $A$ survives over a retained set $E\subseteq\Dn$ exactly when $E$ contains a minimal support.

\section{Recursive Datalog}\label{sec:recursive-supports}
We compare fixed UCQs with one fixed recursive Datalog program. A CQ has the form
$q(\bar{x}) = \exists\bar{y}\,\bigwedge_{i=1}^{m} R_i(\bar{z}_i)$, where each $\bar{z}_i$ uses variables from $\bar{x}\cup\bar{y}$ \cite{abiteboul1995foundations}. For a database $\D$ and tuple $\bar a$,\, $\D\models q(\bar a)$ holds if there is a valuation $h$ with $h(\bar{x})=\bar a$ and $R_i(h(\bar z_i))\in \D$ for every $i$. A UCQ is a finite disjunction of CQs with the same free
variables.

For a $\UCQ$ $q$ over the extensional signature $\EDB$, a disjoint partition $\D=\Dx\cup\Dn$ of a finite $\EDB$-database, and an answer $\bar a$ with $\D\models q(\bar a)$, a set $S\subseteq\Dn$ is an endogenous support for $q(\bar a)$ if
$\Dx\cup S\models q(\bar a)$.
It is minimal if no proper subset has this property. For a fixed $\UCQ$ $q$, let $m(q)$ be the maximum number of atoms in a disjunct of $q$, and let $\Supp_{q,\D,\bar a}$ denote the minimal endogenous supports
for $q(\bar a)$.

\begin{theorem}\label{thm:separation}
Following holds.
\begin{enumerate}
    \item For every fixed $\UCQ$ $q$, every disjoint partition $\D=\Dx\cup\Dn$, every answer $\bar a$, and every $S\in\Supp_{q,\D,\bar a}$,
\[|S|\leq m(q)\quad\text{and}\quad
  |\Supp_{q,\D,\bar a}|\leq \sum_{i=0}^{m(q)}\binom{|\Dn|}{i}.\]
  \item For the fixed positive-length transitive-closure program $\TC$, there
is a family of instances $(\emptyset,\D_r,A_r)_{r\geq 1}$, with exogenous
database empty and endogenous database $\D_r$, such that
\[
\max_{S\in\Supp_{\TC,\D_r,A_r}} |S|=\Theta(|\D_r|)
\quad\text{and}\quad
|\Supp_{\TC,\D_r,A_r}|=2^{\Omega(|\D_r|)}.
\]
\end{enumerate}
\end{theorem}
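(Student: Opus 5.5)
For part~1, I will argue via valuations. Fix $q=q_1\vee\cdots\vee q_\ell$ and let $S\in\Supp_{q,\D,\bar a}$. Since $\Dx\cup S\models q(\bar a)$, some disjunct $q_j$ has a valuation $h$ with $h(\bar x)=\bar a$ and every $R_i(h(\bar z_i))\in\Dx\cup S$.

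Let $W=\{R_i(h(\bar z_i))\}\cap S$. Then $|W|\le m(q)$, and $h$ also witnesses $\Dx\cup W\models q(\bar a)$. By minimality of $S$ we get $W=S$, hence $|S|\le m(q)$. The count then follows immediately: every minimal support is a subset of $\Dn$ of size at most $m(q)$, so $|\Supp_{q,\D,\bar a}|\le\sum_{i=0}^{m(q)}\binom{|\Dn|}{i}$.

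For part~2, I will use a chain of ``diamonds''. Take vertices $v_0,\dots,v_r$, and for each $i<r$ add fresh vertices $a_i,b_i$ with edge facts
\[\Edge(v_i,a_i),\ \Edge(a_i,v_{i+1}),\ \Edge(v_i,b_i),\ \Edge(b_i,v_{i+1}).\]
Set $\D_r$ to be these $4r$ facts, all endogenous, and $A_r=\Path(v_0,v_r)$. The key lemma is that, for $\TC$ with empty exogenous part and $s\neq t$, the minimal supports of $\Path(s,t)$ are exactly the edge sets of simple directed $s$-$t$ paths.

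I will prove the lemma in three steps.
\begin{enumerate}
\item By induction on the Kleene stages, $\TC\cup E\models\Path(x,y)$ iff $E$ contains a directed walk of positive length from $x$ to $y$.
\item Any walk from $s$ to $t$ with $s\neq t$ contains a simple path, so every minimal support is the edge set of a simple path.
\item The edge set of a simple path $P$ is minimal: removing any edge of $P$ leaves $P$'s edges splitting into two pieces with no directed $s$-$t$ walk.
\end{enumerate}
Step~3 is easiest to see via reachability from $s$ within the remaining edges of $P$: each vertex of $P$ has out-degree at most one in $P$, so the only walk from $s$ follows $P$ and gets stuck at the removed edge.

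Applying the lemma to the diamond chain, the simple $v_0$-$v_r$ paths correspond bijectively to choices in $\{a,b\}^r$. This gives $2^r=2^{|\D_r|/4}$ minimal supports, each of size $2r=|\D_r|/2=\Theta(|\D_r|)$. Since every support is a subset of $\D_r$, the maximum size is trivially $O(|\D_r|)$, which completes the $\Theta$ bound. The constraint that $A_r$'s constants lie in the active domain holds trivially.

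The main obstacle is the lemma, in particular the minimality direction (step~3) and the walk characterization of entailment. Everything else is counting.
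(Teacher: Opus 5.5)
Your proposal is correct and follows essentially the paper's proof. Part~1 is the same valuation-restriction argument: the endogenous image $W$ of one witnessing valuation is itself a support, so minimality forces $S=W$. Part~2 rests on the same lemma, that minimal supports of $\Path(s,t)$ under $\TC$ are exactly the edge sets of simple $s$-$t$ paths, proved via the walk characterization of the Kleene stages.

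The only difference is cosmetic. You use a chain of $r$ diamonds (supports of size $2r$), while the paper uses layers $\{a_i,b_i\}$ with all edges between consecutive layers (supports of size $r+1$). Both families have $4r$ facts and $2^r$ minimal supports, so either works.
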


 In Theorem~\ref{thm:separation}, asymptotic notation is with respect to database size: $\Theta(|\D_r|)$ means linear in $|\D_r|$, and
$2^{\Omega(|\D_r|)}$ means exponential in $|\D_r|$. The first part of the theorem gives polynomially many minimal supports in $|\Dn|$ for
fixed $\UCQ$s. The second part shows that a fixed recursive program can have minimal supports of unbounded cardinality and exponentially many such
supports.

\begin{proof}
Let $S\in\Supp_{q,\D,\bar a}$. Since $\Dx\cup S\models q(\bar a)$, there are a disjunct $q_j$ of $q$ and a valuation $h$ such that $h$ maps the free variables of $q_j$ to $\bar a$ and maps every body atom of $q_j$ to a fact of $\Dx\cup S$. Let
\[
  W=\{\,R(h(\bar z))\in S \mid R(\bar z)\text{ is a body atom of }q_j\,\}.
\]
Every body atom of $q_j$ is mapped either to an exogenous fact in $\Dx$ or to a fact in $W$. Thus, the same valuation $h$ witnesses $\Dx\cup W\models q(\bar a)$. Since $W\subseteq S$ and $S$ is inclusion-minimal among endogenous supports, $S=W$. The chosen disjunct has at most $m(q)$ body atoms, so $|S|=|W|\leq m(q)$. Therefore, every minimal support is a subset of $\Dn$ of size at most $m(q)$, and the number of such subsets is bounded by $\sum_{i=0}^{m(q)}\binom{|\Dn|}{i}$.

For the recursive case, consider the fixed positive-length transitive-closure program
\[
\begin{array}{rcl}
\Path(x,y) \leftarrow \Edge(x,y),\quad
\Path(x,y) \leftarrow \Edge(x,z),\Path(z,y).
\end{array}
\tag{$\TC$}
\]
For $r\geq 1$, let $G_r$ be the layered directed graph whose vertex layers are
$L_0=\{s\}$, $L_i=\{a_i,b_i\}$ for $1\leq i\leq r$, and
$L_{r+1}=\{t\}$. The edge set $E(G_r)$ contains exactly all directed edges from
one layer to the next, i.e., all pairs $(u,v)$ with $u\in L_i$ and
$v\in L_{i+1}$ for some $0\leq i\leq r$. Let
$\D_r=\{\Edge(u,v)\mid (u,v)\in E(G_r)\}$, with all facts endogenous, and let
$A_r=\Path(s,t)$.

For any $S\subseteq \D_r$, let $G_S$ be the subgraph of $G_r$ whose edge set corresponds to the edge facts in $S$. We prove
\[
\begin{aligned}
  \TC\cup S\models\Path(u,v)
  \,\,\Longleftrightarrow\,\,&
  G_S\text{ has a positive-length directed walk}\\
  &\text{from }u\text{ to }v.
\end{aligned}
\]
For the forward implication, use induction on the first Kleene stage at which $\Path(u,v)$ appears. If the base rule derives $\Path(u,v)$, then $\Edge(u,v)\in S$, giving a one-edge walk. If the recursive rule derives $\Path(u,v)$, then $\Edge(u,w)\in S$ and $\Path(w,v)$ appeared at an earlier stage; by induction, $G_S$ has a positive-length walk from $w$ to $v$, and prefixing the edge $(u,w)$ gives a positive-length walk from $u$ to $v$. For the reverse implication, use induction on the length $\ell\geq 1$ of a walk $u=v_0,v_1,\ldots,v_\ell=v$. If $\ell=1$, the base rule applies to $\Edge(u,v)$. If $\ell>1$, the induction hypothesis derives $\Path(v_1,v)$ from the suffix, and the recursive rule together with $\Edge(u,v_1)$ derives $\Path(u,v)$.

Since $G_r$ is acyclic and layered, every directed walk from $s$ to $t$ is a simple path that uses exactly one edge between each consecutive pair of layers. If $P$ is such a path, then its edge-fact set $F(P)$ entails $\Path(s,t)$. Removing any edge of $P$ leaves no edge sequence from $s$ to $t$ inside the graph with edge set $E(P)\setminus\{e\}$; hence no proper subset of $F(P)$ is a support. Conversely, if $S$ is a minimal support for $A_r$, then $G_S$ contains an $s$-$t$ path $P$. The set $F(P)\subseteq S$ is itself a support, so the minimality of $S$ gives $S=F(P)$. Thus, the minimal supports are exactly the edge-fact sets of simple directed $s$-$t$ paths in $G_r$.

Each such path chooses one of the two vertices in every intermediate layer and has $r+1$ edges, so there are $2^r$ minimal supports and each has cardinality $r+1$. The graph has two edges from $L_0$ to $L_1$, two edges from $L_r$ to $L_{r+1}$, and $4(r-1)$ edges between intermediate consecutive layers. Thus, $|\D_r|=4r$, the maximum support cardinality $\max_{S\in\Supp_{\TC,\D_r,A_r}} |S|$ is $r+1=\Theta(|\D_r|)$, and the number of minimal supports $|\Supp_{\TC,\D_r,A_r}|$ is $2^r=2^{\Omega(|\D_r|)}$.
\end{proof}
Theorem~\ref{thm:separation} separates fixed $\UCQ$s from recursive Datalog by
the size and number of minimal supports.


\section{Causality and Robustness}
\label{sec:causality-supports}

We now connect supports with the standard deletion-based notions of query-answer causality: actual cause, counterfactual cause, contingency, and responsibility \cite{meliou2010causality,halpern2005causes,chockler2004responsibility}. We show that the support hypergraph is exact for these notions and for deletion robustness.

\begin{definition}
\label{def:causality}
Let $\PiP$ be a positive Datalog program, let $(\Dx,\Dn,A)$ be a partitioned
instance for $\PiP$, let $\D=\Dx\cup\Dn$, and assume $\PiP\cup\D\models A$. For
$\tau\in\Dn$, the fact $\tau$ is a counterfactual cause of $A$ if
$\PiP\cup(\D\setminus\{\tau\})\not\models A$. The fact $\tau$ is an actual cause of $A$ if there is $\Gamma\subseteq\Dn\setminus\{\tau\}$ such that $\PiP\cup(\D\setminus\Gamma)\models A$ and $\PiP\cup(\D\setminus(\Gamma\cup\{\tau\}))\not\models A$. Such a set $\Gamma$ is a contingency set for $\tau$. The responsibility of $\tau$ for $A$ is
\[\ma R_{\PiP,\D,A}(\tau)=
\begin{cases}
\dfrac{1}{1+\ma C_{\PiP,\D,A}(\tau)}
& \text{if $\tau$ is an actual cause,}\\[1ex]
0 & \text{otherwise,}
\end{cases}\]
where $\ma C_{\PiP,\D,A}(\tau)$ is the minimum cardinality of a
contingency set for $\tau$.
\end{definition}

 As in resilience and deletion propagation for CQs \cite{freire2015complexity}, we measure deletions by whether they destroy the answer .
\begin{definition}
\label{def:robustness-transversal}
Let $(\Dx,\Dn,A)$ be a partitioned instance for a positive Datalog program  $\PiP$, $\D=\Dx\cup\Dn$ and $\PiP\cup\D\models A$ be as in Definition~\ref{def:causality}. Then,
the deletion robustness radius of $A$ is
\[
  r_{\PiP,\D,A}
  =
  \min\{\,|\Delta| \mid \Delta\subseteq\Dn
  \text{ and }
  \PiP\cup(\D\setminus\Delta)\not\models A\,\}.
\]
If no such $\Delta$ exists, then $r_{\PiP,\D,A}=\Inf$. 
\end{definition}

We next relate the support hypergraph to deletion robustness. 
Let $U$ be a finite set and let $\mathcal{F}\subseteq 2^U$. A transversal of
$\mathcal{F}$ is a set $T\subseteq U$ such that $T\cap F\neq\emptyset$ for
every $F\in\mathcal{F}$ \cite{berge1984hypergraphs}. The transversal number
$\trnum(\mathcal{F})$ is the minimum cardinality of such a set. If
$\emptyset\in\mathcal{F}$, then no transversal exists, and we set
$\trnum(\mathcal{F})=\Inf$. This convention matches exogenous entailment: if
the answer already follows without endogenous facts, then no deletion of
endogenous facts can destroy it.

\begin{theorem}
\label{thm:causality-robustness}
Let $(\Dx,\Dn,A)$ be a partitioned instance for a positive Datalog program  $\PiP$, $\D=\Dx\cup\Dn$ and $\PiP\cup\D\models A$. Let $\Supp=\Supp_{\PiP,\D,A}$. Then, $\forall \tau\in\Dn$:
\begin{enumerate}
    \item $\tau$  is an actual cause of $A$ iff\,  $\exists S\in\Supp.\, \tau \in S$,
    \item  $\tau$  is a counterfactual cause of $A$ iff\,  $\forall S\in\Supp.\, \tau \in S$.
\end{enumerate}
Moreover, $r_{\PiP,\D,A}=\trnum(\Supp)$.
\end{theorem}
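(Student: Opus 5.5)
The whole argument reduces to Proposition~\ref{prop:decomposition}. For every $\Delta\subseteq\Dn$ we have $\D\setminus\Delta=\Dx\cup(\Dn\setminus\Delta)$. Applying the proposition with $E=\Dn\setminus\Delta$ gives the \emph{survival criterion}: $\PiP\cup(\D\setminus\Delta)\models A$ iff some $S\in\Supp$ satisfies $S\cap\Delta=\emptyset$. Each of the three claims then becomes a purely combinatorial statement about the antichain $\Supp$. Note that $\Supp\neq\emptyset$, because $\PiP\cup\D\models A$ and the proposition with $E=\Dn$ yields a support.

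\textbf{Counterfactual causes.} By the survival criterion with $\Delta=\{\tau\}$, $A$ fails after deleting $\tau$ iff every $S\in\Supp$ meets $\{\tau\}$, that is, iff $\tau\in S$ for all $S\in\Supp$. In the degenerate case $\Supp=\{\emptyset\}$ neither side holds, so no special treatment is needed.

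\textbf{Actual causes.} For the forward direction, suppose $\Gamma$ is a contingency for $\tau$. Survival after deleting $\Gamma$ gives some $S$ with $S\cap\Gamma=\emptyset$. Failure after deleting $\Gamma\cup\{\tau\}$ says that every such $S$ meets $\Gamma\cup\{\tau\}$. Hence this $S$ must contain $\tau$.

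For the converse, take $S_0\in\Supp$ with $\tau\in S_0$ and set $\Gamma=\Dn\setminus S_0$. Then $\tau\notin\Gamma$ and $\D\setminus\Gamma=\Dx\cup S_0\models A$. Deleting $\tau$ as well leaves $\Dx\cup(S_0\setminus\{\tau\})$. Entailment there would, by the proposition, give a support $S\subseteq S_0\setminus\{\tau\}\subsetneq S_0$, contradicting the antichain property. So $\Gamma$ is a contingency set.

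\textbf{Robustness radius.} The survival criterion says $\Delta$ destroys $A$ iff $\Delta$ meets every $S\in\Supp$, i.e.\ iff $\Delta$ is a transversal of $\Supp$. Minimizing $|\Delta|$ on both sides gives $r_{\PiP,\D,A}=\trnum(\Supp)$ whenever a transversal exists. If $\emptyset\in\Supp$, no transversal exists and no destroying $\Delta$ exists, so both sides are $\Inf$ by the respective conventions. Conversely, if $\emptyset\notin\Supp$, then $\Dn$ itself is a transversal and both quantities are finite.

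The only step with real content is the converse for actual causes. There one must build a contingency from a support containing $\tau$ and use minimality, via the antichain property, to show that removing $\tau$ kills every support inside $S_0$. I expect that step to need the most care; everything else is direct substitution into Proposition~\ref{prop:decomposition}.
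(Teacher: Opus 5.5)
Your proposal is correct and follows the paper's proof essentially step for step: both reduce every claim to Proposition~\ref{prop:decomposition} with $E=\Dn\setminus\Delta$, build the contingency $\Gamma=\Dn\setminus S_0$ for the converse direction of actual causality, and identify destroying deletions with transversals of $\Supp$. The only difference is in that converse. The paper gets $\PiP\cup\Dx\cup(S_0\setminus\{\tau\})\not\models A$ directly from the inclusion-minimality of $S_0$ in Definition~\ref{def:support}, so your detour through the proposition and the antichain property is valid but unnecessary, and this step is not actually the delicate one.
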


\begin{proof}
Suppose $\tau$ is an actual cause with contingency set $\Gamma$. Let $E=\Dn\setminus\Gamma$. The first contingency condition gives $\PiP\cup\Dx\cup E\models A$. By Proposition~\ref{prop:decomposition}, choose $S\in\Supp$ with $S\subseteq E$. If $\tau\notin S$, then $S\subseteq E\setminus\{\tau\}$. Monotonicity of positive Datalog entailment then gives $\PiP\cup\Dx\cup(E\setminus\{\tau\})\models A$, contradicting the second contingency condition. Hence, $\tau\in S$.

Conversely, let $S\in\Supp$ with $\tau\in S$, and put $\Gamma=\Dn\setminus S$. Then $\Gamma\subseteq\Dn\setminus\{\tau\}$, and $\D\setminus\Gamma=\Dx\cup S$, so $\PiP\cup(\D\setminus\Gamma)\models A$. Also,
\[
  \D\setminus(\Gamma\cup\{\tau\})=\Dx\cup(S\setminus\{\tau\}).
\]
Since $S\setminus\{\tau\}\subsetneq S$ and $S$ is inclusion-minimal among supports, $\PiP\cup\Dx\cup(S\setminus\{\tau\})\not\models A$. Thus, $\Gamma$ is a contingency set for $\tau$.

For counterfactual causes, Definition~\ref{def:causality} gives the condition $\PiP\cup\Dx\cup(\Dn\setminus\{\tau\})\not\models A$. By Proposition~\ref{prop:decomposition}, the corresponding entailment holds if and only if some $S\in\Supp$ satisfies $S\subseteq\Dn\setminus\{\tau\}$. Such an $S$ exists if and only if some minimal support avoids $\tau$. Therefore, the non-entailment condition holds if and only if every minimal support contains $\tau$.

For robustness, let $\Delta\subseteq\Dn$ and put $E=\Dn\setminus\Delta$. Proposition~\ref{prop:decomposition} gives
\[
  \PiP\cup(\D\setminus\Delta)\models A
  \,\,\Longleftrightarrow\,\,
  \exists S\in\Supp\text{ with }S\subseteq E.
\]
Since $E=\Dn\setminus\Delta$, the condition $S\subseteq E$ is equivalent to $S\cap\Delta=\emptyset$. Hence, $\Delta$ destroys $A$ if and only if $\Delta$ intersects every member of $\Supp$, i.e., if and only if $\Delta$ is a transversal of $\Supp$. Minimizing $|\Delta|$ over all destroying deletions gives $r_{\PiP,\D,A}=\trnum(\Supp)$, with the case $\emptyset\in\Supp$ covered by the convention $\trnum(\Supp)=\infty$.
\end{proof}

The support hypergraph also determines responsibility. The next proposition gives the exact contingency-size formula. Let $\Supp=\Supp_{\PiP,\D,A}$. For $\tau\in\Dn$, let
\[\Supp^+_{\tau}=\{\,S\in\Supp \mid \tau\in S\,\}
  \quad\text{and}\quad
  \Supp^-_{\tau}=\{\,S\in\Supp\mid \tau\notin S\,\}.\]

\begin{proposition}
\label{prop:responsibility}
For every $\tau\in\Dn$,
\[
\ma R_{\PiP,\D,A}(\tau)=
\begin{cases}
0, & \Supp^+_\tau=\emptyset,\\[.6ex]
\dfrac{1}{1+\ma C_{\PiP,\D,A}(\tau)}, & \Supp^+_\tau\neq\emptyset,
\end{cases}
\]
with $\ma C_{\PiP,\D,A}(\tau)=
\min_{S^+\in\Supp^+_\tau}
\min\{\,|\Gamma| \mid
\Gamma\subseteq\Dn\setminus S^+,\ 
\forall S^-\in\Supp^-_\tau: \Gamma\cap S^-\neq\emptyset\}$.
\end{proposition}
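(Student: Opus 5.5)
The plan is to characterize contingency sets for $\tau$ directly in terms of $\Supp$, using Proposition~\ref{prop:decomposition}, and then minimize. The case $\Supp^+_\tau=\emptyset$ follows from Theorem~\ref{thm:causality-robustness}(1): $\tau$ is not an actual cause, so $\ma R_{\PiP,\D,A}(\tau)=0$ by Definition~\ref{def:causality}. So assume $\Supp^+_\tau\neq\emptyset$. Then $\tau$ is an actual cause, and it remains to prove the stated formula for $\ma C_{\PiP,\D,A}(\tau)$.

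\textbf{Key lemma.} For $\Gamma\subseteq\Dn\setminus\{\tau\}$, $\Gamma$ is a contingency set for $\tau$ if and only if two conditions hold: some $S^+\in\Supp^+_\tau$ is disjoint from $\Gamma$, and $\Gamma$ meets every $S^-\in\Supp^-_\tau$.

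To prove it, put $E=\Dn\setminus\Gamma$. By Proposition~\ref{prop:decomposition}, the condition $\PiP\cup(\D\setminus\Gamma)\models A$ holds iff some $S\in\Supp$ satisfies $S\cap\Gamma=\emptyset$. Likewise, $\PiP\cup(\D\setminus(\Gamma\cup\{\tau\}))\not\models A$ holds iff every $S\in\Supp$ meets $\Gamma\cup\{\tau\}$. Every member of $\Supp^+_\tau$ trivially meets $\Gamma\cup\{\tau\}$. For $S^-\in\Supp^-_\tau$, meeting $\Gamma\cup\{\tau\}$ is the same as meeting $\Gamma$. So the second contingency condition says exactly that $\Gamma$ is a transversal of $\Supp^-_\tau$. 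Under that condition, any $S\in\Supp$ disjoint from $\Gamma$ must lie in $\Supp^+_\tau$. Hence the first condition reduces to the existence of $S^+\in\Supp^+_\tau$ with $\Gamma\subseteq\Dn\setminus S^+$.

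\textbf{Minimization.} By the lemma, the set of contingency sets is the union, over $S^+\in\Supp^+_\tau$, of the families
\[
\{\Gamma\subseteq\Dn\setminus S^+ \mid \forall S^-\in\Supp^-_\tau:\ \Gamma\cap S^-\neq\emptyset\}.
\]
The side constraint $\Gamma\subseteq\Dn\setminus\{\tau\}$ is automatic here, since $\tau\in S^+$. Taking the minimum cardinality over this union gives the nested min in the statement.

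\textbf{Nonemptiness.} The one detail to check is that the inner families are not all empty, so that the formula agrees with the actual minimum. For a given $S^+$, the choice $\Gamma=\Dn\setminus S^+$ works: each $S^-\in\Supp^-_\tau$ is not a subset of $S^+$, because $\Supp$ is an antichain (Proposition~\ref{prop:decomposition}) and $S^-\neq S^+$. So $S^-$ meets $\Dn\setminus S^+$. Each inner min is therefore finite, and the outer min equals $\ma C_{\PiP,\D,A}(\tau)$. This is the step where care is needed; the rest is bookkeeping.
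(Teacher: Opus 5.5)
Your proposal is correct and follows essentially the same route as the paper: the case $\Supp^+_\tau=\emptyset$ via Theorem~\ref{thm:causality-robustness}, then Proposition~\ref{prop:decomposition} to show that contingency sets are exactly those $\Gamma$ that avoid some $S^+\in\Supp^+_\tau$ and hit every $S^-\in\Supp^-_\tau$, then minimization. You also make two points explicit that the paper leaves implicit, both correctly: that $\Gamma\subseteq\Dn\setminus\{\tau\}$ holds automatically, and that each inner family is nonempty because $\Dn\setminus S^+$ meets every $S^-$ by the antichain property.
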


\begin{remark}
Theorem~\ref{thm:causality-robustness} and Proposition~\ref{prop:responsibility}
show that actual causes, counterfactual causes, responsibility, and deletion
robustness are all determined by the support hypergraph.
\end{remark}

\begin{example}
\label{ex:support-hypergraph-quantities}
Suppose that, for a fixed entailment $A$, the minimal supports are
\[ S_1=\{\alpha,\beta\},\qquad
  S_2=\{\beta,\gamma\},\qquad
  S_3=\{\delta,\epsilon\}.\]
Every fact occurring in these supports is an actual cause, and no fact is a
counterfactual cause, because no fact occurs in all three supports. The robustness
value is $2$: one deletion cannot hit all supports, while $\{\beta,\delta\}$
does. The same hypergraph determines responsibility. For instance, $\beta$ has
a one-fact contingency, such as $\{\delta\}$, and hence responsibility $1/2$.
By contrast, $\alpha$ occurs only in $S_1$; to preserve $S_1$ and hit both
supports avoiding $\alpha$, one must delete $\gamma$ and one of $\delta,
\epsilon$. Thus, $\alpha$ has responsibility $1/3$.
\end{example}

\section{Support Hypergraphs and Provenance}
\label{sec:provenance-separation}

We separate support hypergraphs from derivation-based Datalog provenance.
Proof trees, circuits, and semiring expressions record how an answer is derived
\cite{deutch2014circuits,bourgaux2022revisiting,calautti2024complexity}.
Minimal endogenous supports keep only deletion-minimal input sufficiency, which
is exactly the information needed for causality and robustness.

\begin{example}
\label{ex:two-tc-programs}
Consider the left- and right-recursive Datalog formulations of transitive
closure:
\[
\begin{array}{rclcrcl}
\Path(x,y) &\leftarrow& \Edge(x,y)
&\qquad&
\Path(x,y) &\leftarrow& \Edge(x,y),\\
\Path(x,y) &\leftarrow& \Edge(x,z),\Path(z,y)
&&
\Path(x,y) &\leftarrow& \Path(x,z),\Edge(z,y).
\end{array}
\]
The two programs derive the same reachability atoms over every finite
$\EDB$-database, but they decompose paths in different directions and therefore
have different proof-tree shapes. For the fixed goal $\Path(s,t)$, their
minimal endogenous supports are nevertheless identical: they are the edge sets
of simple directed paths from $s$ to $t$. Thus, the support hypergraph follows
the fixed-goal semantics rather than the chosen recursive presentation.
\end{example}

We first formalize the semantic nature of supports. Let $\Pi_1$ and $\Pi_2$
be positive Datalog programs over the same extensional signature, and let $A$
be a ground atom over a common intensional predicate. We say that $\Pi_1$ and
$\Pi_2$ are equivalent for $A$ over finite extensional databases if, for every
finite extensional database $E$,
\[ \Pi_1\cup E\models A
  \quad\Longleftrightarrow\quad
  \Pi_2\cup E\models A,
\]
where each $\Pi_i$ is evaluated over $\adom(\Pi_i,E)$. The syntactic constants of $A$ are fixed on both sides; if one of those constants is outside the active domain of the retained extensional database for a program, then $A$ is not entailed by that program. This is fixed-goal equivalence for $A$, not full program equivalence \cite{abiteboul1995foundations}.

\begin{theorem}\label{thm:program-invariance}
Let $\Pi_1$ and $\Pi_2$ be positive Datalog programs over the same extensional
signature $\EDB$, and let $A$ be a ground atom over a predicate in the common
intensional signature $\IDB$. If $\Pi_1$ and $\Pi_2$ are equivalent for $A$ over finite $\EDB$-databases, then, for every finite $\EDB$-database $\D=\Dx\cup\Dn$ such
that $(\Dx,\Dn,A)$ is a partitioned instance for both programs and
$\Pi_1\cup\D\models A$,
\[\Supp_{\Pi_1,\D,A}=\Supp_{\Pi_2,\D,A}.\]
\end{theorem}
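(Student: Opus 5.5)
The plan is to show directly that the two programs have the same family of endogenous supports. The minimal ones then coincide, because minimality in Definition~\ref{def:support} is defined purely in terms of which subsets are supports. Fix a finite $\EDB$-database $\D=\Dx\cup\Dn$ such that $(\Dx,\Dn,A)$ is a partitioned instance for both programs and $\Pi_1\cup\D\models A$.

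The first step is to transfer the side condition. Applying fixed-goal equivalence to the finite extensional database $E=\D$ gives $\Pi_2\cup\D\models A$. Hence $\Supp_{\Pi_2,\D,A}$ is defined under the hypotheses of Definition~\ref{def:support}, just like $\Supp_{\Pi_1,\D,A}$.

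The second step handles an arbitrary $S\subseteq\Dn$. The set $\Dx\cup S$ is itself a finite $\EDB$-database, so equivalence applied to $E=\Dx\cup S$ yields
\[
\Pi_1\cup\Dx\cup S\models A\iff\Pi_2\cup\Dx\cup S\models A.
\]
Each side is evaluated over its own active domain $\adom(\Pi_i,\Dx\cup S)$, which is exactly how entailment is defined in the paper. Therefore the predicate ``$S$ is an endogenous support for $A$'' is the same for $\Pi_1$ and $\Pi_2$, for every $S\subseteq\Dn$.

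The final step is minimality. A set $S$ belongs to $\Supp_{\Pi_i,\D,A}$ iff $S$ is a support and no proper subset $S'\subsetneq S$ is a support. Every such $S'$ is again a subset of $\Dn$, so the second step applies to it as well. Hence both conditions agree for $\Pi_1$ and $\Pi_2$, and the two families are equal.

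The main obstacle is a bookkeeping issue around active domains rather than a mathematical one. The retained database $\Dx\cup S$ may lose constants that occur in $A$. Then $A$ could fail to be entailed for reasons unrelated to the rules, or the two programs could have different active domains because their program constants differ. I would handle this by stressing that the definition of fixed-goal equivalence quantifies over every finite database, including such shrunken ones. It also explicitly fixes the convention that $A$ is not entailed when its constants fall outside $\adom(\Pi_i,E)$. So no separate case analysis is needed. It also matters that equivalence is required for all finite databases, not only for those containing $A$'s constants; otherwise the second step would fail for small $S$.
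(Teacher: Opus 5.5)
Your proposal is correct and follows essentially the same route as the paper's proof: you apply fixed-goal equivalence to each retained database $\Dx\cup S$ to get identical support families, then observe that the inclusion-minimal members over the common universe $\Dn$ coincide. Your extra step, which derives $\Pi_2\cup\D\models A$ from equivalence applied to $\D$ itself so that $\Supp_{\Pi_2,\D,A}$ is formally defined, is a small refinement the paper leaves implicit.
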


\begin{proof}
For every $S\subseteq\Dn$, the instance $\Dx\cup S$ is a finite extensional database. By equivalence for $A$, $\Pi_1\cup\Dx\cup S\models A$ if and only if $\Pi_2\cup\Dx\cup S\models A$. Hence, the two programs have the same endogenous supports over $(\Dx,\Dn,A)$. Since inclusion-minimality is taken over the same set $\Dn$, the minimal members of the two support families coincide.
\end{proof}

Theorem~\ref{thm:program-invariance} separates minimal endogenous supports from derivation presentations. Positive Datalog programs that are equivalent for the fixed goal may have different rule decompositions and different proof trees, but they induce the same minimal endogenous supports for that goal.

\begin{corollary}
\label{cor:support-invariance}
Let $\Pi_1$ and $\Pi_2$ be positive Datalog programs, let $(\Dx,\Dn,A)$ be a partitioned instance for both programs, let $\D=\Dx\cup\Dn$, and assume
\[
  \Pi_1\cup\D\models A
  \quad\text{and}\quad
  \Pi_2\cup\D\models A.
\]
If  $\Supp_{\Pi_1,\D,A}=\Supp_{\Pi_2,\D,A}$,
then $\Pi_1$ and $\Pi_2$ induce the same actual causes, counterfactual causes, responsibilities, and deletion robustness radius for $A$ over $\D$.
\end{corollary}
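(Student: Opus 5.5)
The plan is to reduce every quantity in the statement to the family of minimal supports, since the two programs share that family. Theorem~\ref{thm:causality-robustness} and Proposition~\ref{prop:responsibility} already express each quantity using only $\Supp_{\Pi_i,\D,A}$ together with the ground set $\Dn$, and $\Dn$ is the same for both programs because $(\Dx,\Dn,A)$ is a partitioned instance for each. The hypotheses $\Pi_1\cup\D\models A$ and $\Pi_2\cup\D\models A$ are needed because both results assume entailment of $A$ over the full input database. Once we know each result applies to each program separately, the corollary follows by substituting the common family $\Supp:=\Supp_{\Pi_1,\D,A}=\Supp_{\Pi_2,\D,A}$.

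The steps, in order, are as follows.

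\begin{enumerate}
\item \emph{Actual and counterfactual causes.} For each $\tau\in\Dn$, apply Theorem~\ref{thm:causality-robustness} to $\Pi_1$ and to $\Pi_2$. Then $\tau$ is an actual cause for $\Pi_i$ iff $\tau$ lies in some member of $\Supp$, and a counterfactual cause iff $\tau$ lies in every member of $\Supp$. Neither condition mentions $i$, so both programs have the same actual causes and the same counterfactual causes.

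\item \emph{Robustness.} By the same theorem, $r_{\Pi_i,\D,A}=\trnum(\Supp)$ for $i=1,2$. This includes the degenerate case $\emptyset\in\Supp$, where both radii equal $\infty$.

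\item \emph{Responsibility.} Proposition~\ref{prop:responsibility} gives $\ma R_{\Pi_i,\D,A}(\tau)$ in terms of $\Supp^+_\tau$ and $\Supp^-_\tau$. Both sets are determined by $\Supp$ and $\tau$. The inner minimization ranges over $\Gamma\subseteq\Dn\setminus S^+$, which again depends only on $\Dn$. Hence $\ma C_{\Pi_1,\D,A}(\tau)=\ma C_{\Pi_2,\D,A}(\tau)$ whenever $\Supp^+_\tau\neq\emptyset$. In both the zero case and the nonzero case, the two responsibilities coincide.
\end{enumerate}

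The only step that needs care is the last one, since it uses Proposition~\ref{prop:responsibility}, whose proof is not displayed above. If I could not rely on it directly, I would re-derive the needed fact from Proposition~\ref{prop:decomposition}. A set $\Gamma\subseteq\Dn\setminus\{\tau\}$ is a contingency for $\tau$ under $\Pi_i$ iff two conditions hold. First, some member of $\Supp$ avoids $\Gamma$. Second, every member of $\Supp$ avoiding $\Gamma$ contains $\tau$. Since this condition refers only to $\Supp$ and $\Dn$, the set of contingencies, and therefore their minimum size, is the same for both programs.
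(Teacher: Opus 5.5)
Your proposal is correct and follows the paper's own argument: you apply Theorem~\ref{thm:causality-robustness} and Proposition~\ref{prop:responsibility} to each program separately, then observe that the resulting characterizations depend only on the common family $\Supp$ and the shared universe $\Dn$. Your fallback characterization of contingency sets via Proposition~\ref{prop:decomposition} is also correct and matches the reasoning in the paper's proof of Proposition~\ref{prop:responsibility}.
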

Theorem~\ref{thm:causality-robustness} gives the claim for actual causes, counterfactual causes, and deletion robustness. Proposition~\ref{prop:responsibility} gives the claim for responsibility. Both characterizations use only the common support hypergraph.

We next separate supports from proof-tree multiplicity. A finite proof tree for
positive Datalog has the derived atom at the root, rule instances at internal
intensional nodes, and database facts at extensional leaves. We use ordinary finite proof-tree unfoldings; repeated intensional atoms and repeated rule-instance occurrences are not quotiented or represented cyclically.
\begin{proposition}\label{prop:proof-tree-divergence}
There exist a fixed positive recursive Datalog program $\PiP$, a finite
database $\D$ whose facts are all endogenous, and a ground atom $A$ such that
$A$ has infinitely many finite proof trees over $\D$, while
$\Supp_{\PiP,\D,A}$ contains exactly one minimal endogenous support.
\end{proposition}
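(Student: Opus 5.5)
The plan is to take $\PiP=\TC$ and the smallest example that contains a cycle. Let $\D=\{\Edge(a,a)\}$, with the fact endogenous and $\Dx=\emptyset$, and let $A=\Path(a,a)$. The self-loop is what makes the cyclic proof-tree unfoldings possible. Since $\D$ has only one endogenous fact, the support family is easy to compute, and the real work is to exhibit infinitely many distinct finite proof trees.

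\textbf{Step 1: infinitely many proof trees.} For each $n\ge 1$ define a proof tree $T_n$ by recursion on $n$. The tree $T_1$ has root $\Path(a,a)$, obtained by the base rule instance $\Path(a,a)\leftarrow\Edge(a,a)$, with the single leaf $\Edge(a,a)$. For $n\ge 1$, the tree $T_{n+1}$ has root $\Path(a,a)$, obtained by the recursive rule instance $\Path(a,a)\leftarrow\Edge(a,a),\Path(a,a)$ with $x=y=z=a$. Its children are the leaf $\Edge(a,a)$ and a copy of $T_n$. Every rule instance used is a ground instance of $\TC$ over the active domain $\{a\}$, every leaf is a fact of $\D$, and every tree is finite. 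The height of $T_n$ is $n$, so the $T_n$ are pairwise distinct. Since proof trees are ordinary finite unfoldings with no quotienting of repeated atoms, this gives infinitely many finite proof trees of $A$.

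\textbf{Step 2: exactly one minimal support.} The endogenous database has two subsets, $\emptyset$ and $\{\Edge(a,a)\}$. The set $\{\Edge(a,a)\}$ entails $A$, as witnessed by $T_1$ or directly by the base rule. For the empty set, the input database is empty and $\adom(\TC,\emptyset)=\emptyset$, because $\TC$ contains no constants. The least fixpoint is therefore empty, and under the paper's active-domain convention $\Path(a,a)$ is not entailed. Even apart from that convention, no rule can fire because there are no $\Edge$ facts. Hence $\{\Edge(a,a)\}$ is the unique support and it is minimal, so $\Supp_{\TC,\D,A}=\{\{\Edge(a,a)\}\}$.

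The main obstacle is fixing what counts as a distinct proof tree, so I would state this explicitly. Distinctness is measured by tree height, or equivalently by the number of nodes. This relies on the paper's stipulation that repeated rule-instance occurrences are not quotiented. Apart from that, the argument is routine. If a less degenerate example seems preferable, the same reasoning applies to the two-node cycle $\Edge(a,b),\Edge(b,a)$ with goal $\Path(a,b)$. There the unfoldings go around the cycle arbitrarily often, and the unique minimal support is $\{\Edge(a,b)\}$.
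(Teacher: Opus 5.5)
Your proof is correct and is essentially the paper's argument: the paper also takes $\TC$ with a self-loop $\Edge(a,a)$, unfolds it $k$ times to obtain infinitely many finite proof trees, and computes the support family directly. The only difference is the instance. The paper embeds the loop as $\{\Edge(s,a),\Edge(a,a),\Edge(a,t)\}$ with goal $\Path(s,t)$, so the loop fact occurs in proof trees but in no minimal support. The statement does not require this feature, and your one-fact instance lacks it, but your two-cycle variant with $\Edge(b,a)$ exhibits it.
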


\begin{proof}
Let $\PiP=\TC$, let $\D=\{\Edge(s,a),\Edge(a,a),\Edge(a,t)\}$,
with all three facts endogenous, and let $A=\Path(s,t)$. For each $k\geq 0$, construct a finite proof tree as follows. The root $\Path(s,t)$ is justified by the ground rule instance
\[ \Path(s,t)\leftarrow \Edge(s,a),\Path(a,t).\]
Below the child $\Path(a,t)$, apply the recursive ground instance
\[\Path(a,t)\leftarrow \Edge(a,a),\Path(a,t)
\]
exactly $k$ times, and then close the final $\Path(a,t)$ node with the base instance $\Path(a,t)\leftarrow \Edge(a,t)$. After the final base-rule closure, every leaf is an extensional fact in $\D$, and every internal node is justified by a ground rule instance of $\TC$. The tree constructed for $k$ has a different number of recursive $\Path(a,t)$ nodes from the tree constructed for $k'\neq k$. Thus, $A$ has infinitely many finite proof trees over $\D$.

We now compute the minimal supports. In the directed graph with edge set corresponding to $\D$, every positive-length directed walk from $s$ to $t$ has the form
\[ s,a,\underbrace{a,\ldots,a}_{k\text{ repetitions}},t
\]
for some $k\geq 0$; equivalently, it uses $\Edge(s,a)$, then $\Edge(a,a)$ zero or more times, and finally $\Edge(a,t)$. Hence, every support for $\Path(s,t)$ contains both $\Edge(s,a)$ and $\Edge(a,t)$. The set $S_0=\{\Edge(s,a),\Edge(a,t)\}$ entails $\Path(s,t)$ by deriving $\Path(a,t)$ from $\Edge(a,t)$ by the base rule, and then $\Path(s,t)$ from $\Edge(s,a)$ and $\Path(a,t)$ by the recursive rule. No proper subset of $S_0$ contains both required edge facts, so no proper subset entails $\Path(s,t)$. Thus, $S_0$ is the unique minimal endogenous support. The loop fact $\Edge(a,a)$ occurs in the proof trees with $k>0$, but it belongs to no minimal support.
\end{proof}

Proposition \ref{prop:proof-tree-divergence} shows that derivation multiplicity and minimal support information can diverge sharply: infinitely many proof trees may collapse to one minimal endogenous support. Thus, support hypergraphs are coarser than proof-tree provenance, but they retain exactly the deletion-relevant information used by causality, responsibility, and robustness.

\section{Reachability}\label{sec:reachability}

We now specialize the support-hypergraph analysis to positive-length
reachability. For the recursive reachability program, minimal endogenous
supports are exactly simple directed paths, and deletion robustness is exactly
the minimum directed edge-cut size.


Let $G=(V,E)$ be a finite directed graph, and let
$\D_G=\{\Edge(u,v)\mid (u,v)\in E\}$ be the corresponding $\EDB$-database. Throughout this section, every fact in $\D_G$ is endogenous. A directed $s$-$t$ path is \emph{simple} if it has the form $P=(v_0,v_1,\ldots,v_k)$, where $v_0=s$, $v_k=t$,
$(v_{i-1},v_i)\in E$ for every $1\leq i\leq k$, and the vertices $v_0,\ldots,v_k$ are pairwise distinct. For such a path $P$, let
\[F(P)=\{\Edge(v_{i-1},v_i)\mid 1\leq i\leq k\}\]
be the corresponding set of edge facts. A directed $s$-$t$ edge cut is a set
$C\subseteq E$ such that $(V,E\setminus C)$ has no directed path from $s$
to $t$. We write $\lambda_G(s,t)$ for the minimum cardinality of such a cut.
\begin{theorem}
\label{thm:reachability}
Let $G=(V,E)$ be a finite directed graph, let $s,t\in V$ be distinct vertices, and assume $\TC\cup\D_G\models\Path(s,t)$, where $\TC$ is the positive-length transitive-closure program. Then,
\[\Supp_{\TC,\D_G,\Path(s,t)} =\{\,F(P)\mid P \text{ is a simple directed }s\text{-}t\text{ path in }G\,\},\]
and \, $r_{\TC,\D_G,\Path(s,t)}=\lambda_G(s,t)$.
\end{theorem}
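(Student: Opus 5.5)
The plan is to reuse the walk characterization from the proof of Theorem~\ref{thm:separation} and then invoke Theorem~\ref{thm:causality-robustness}. First I re-establish, for an arbitrary finite directed graph $G$ and any $S\subseteq\D_G$, the equivalence
\[
\TC\cup S\models\Path(u,v)\iff G_S\text{ has a positive-length directed walk from }u\text{ to }v,
\]
where $G_S$ is the subgraph with edge facts $S$. The two inductions in that proof, on the Kleene stage and on the walk length, never used that $G_r$ was layered or acyclic, so they carry over verbatim. One caveat concerns the active domain: $s$ and $t$ must lie in $\adom(\TC,S)$. This causes no trouble, because a walk from $s$ to $t$ in $G_S$ already places both endpoints in the domain. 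Since the exogenous database is empty, $S$ is a support for $\Path(s,t)$ exactly when $G_S$ has a positive-length $s$-$t$ walk. As $s\neq t$, this is the same as $G_S$ having an $s$-$t$ walk, and by shortcutting repeated vertices, the same as $G_S$ having a simple $s$-$t$ path.

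Next I identify the minimal supports. Let $P$ be a simple $s$-$t$ path. Then $F(P)$ is a support. To show it is minimal, I argue that if $e=\Edge(v_{i-1},v_i)$ is removed, no $s$-$t$ path survives in $F(P)\setminus\{e\}$. The remaining edges form two vertex-disjoint directed paths, $v_0\ldots v_{i-1}$ and $v_i\ldots v_k$. Because the vertices of $P$ are distinct, no remaining edge leaves the first component toward the second. Hence $t$ is unreachable from $s$, and by monotonicity (Proposition~\ref{prop:decomposition}) no proper subset of $F(P)$ is a support. Conversely, let $S$ be a minimal support. Then $G_S$ contains some simple $s$-$t$ path $P$, and $F(P)\subseteq S$ is itself a support, so minimality forces $S=F(P)$. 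I expect this step to need the most care: distinctness of the vertices (including the condition $s\neq t$) is exactly what rules out a shortcut after deleting one edge, and minimality must be checked in the direction that excludes non-simple walks.

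Finally I handle robustness. By Theorem~\ref{thm:causality-robustness}, $r_{\TC,\D_G,\Path(s,t)}=\trnum(\Supp)$. By the characterization just proved, a set $\Delta\subseteq\D_G$ is a transversal of $\Supp$ exactly when it meets the edge set of every simple $s$-$t$ path. This holds exactly when $(V,E\setminus\Delta)$ has no simple $s$-$t$ path, hence no $s$-$t$ path at all, which means the corresponding edge set is a directed $s$-$t$ edge cut. The correspondence $\Edge(u,v)\leftrightarrow(u,v)$ is a bijection that preserves cardinality, so the two minima coincide and $r=\lambda_G(s,t)$. The value is finite, because $\emptyset\notin\Supp$ (every simple path has $k\geq1$ since $s\neq t$) and $E$ itself is a cut.
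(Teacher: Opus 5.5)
Your proposal is correct and follows essentially the same route as the paper: the walk characterization of $\TC$ by induction on Kleene stage and walk length, the shortcut to simple paths using $s\neq t$, the argument that deleting one path edge splits $F(P)$ into two parts with no connecting edge, and the transversal/edge-cut correspondence via Theorem~\ref{thm:causality-robustness}. Your remarks on the active domain and on finiteness of the radius make explicit points the paper leaves implicit.
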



\begin{proof}
Let $S\subseteq\D_G$, and let $G_S$ be the directed graph with edge set $\{(u,v)\mid \Edge(u,v)\in S\}$. The exact meaning of $\TC$ over $S$ is:
\[
\begin{aligned}
  \TC\cup S\models\Path(a,b)
  \,\,\Longleftrightarrow\,\,&
  G_S\text{ has a positive-length directed walk}\\
  &\text{from }a\text{ to }b.
\end{aligned}
\]
For soundness, argue by induction on the fixpoint stage at which $\Path(a,b)$ first appears. If it is derived by the base rule, then $\Edge(a,b)\in S$, so the one-edge walk exists. If it is derived by the recursive rule, then $\Edge(a,c)\in S$ and $\Path(c,b)$ was derived earlier; by induction there is a positive-length walk from $c$ to $b$, and prefixing $(a,c)$ gives a walk from $a$ to $b$. For completeness, use induction on the length $\ell\geq 1$ of a directed walk $a=v_0,v_1,\ldots,v_\ell=b$. The case $\ell=1$ uses the base rule. For $\ell>1$, the suffix $v_1,\ldots,v_\ell$ derives $\Path(v_1,b)$ by induction, and the recursive rule with $\Edge(a,v_1)$ derives $\Path(a,b)$.

Since $s\neq t$, every positive-length directed walk from $s$ to $t$ contains a simple directed $s$-$t$ path obtained by removing closed repeated-vertex segments. If $P$ is a simple directed $s$-$t$ path, then $F(P)$ entails $\Path(s,t)$ by the equivalence above. No proper subset of $F(P)$ entails $\Path(s,t)$: after removing any edge of the path, the graph containing only the remaining path edges has no directed $s$-$t$ walk. Thus, $F(P)$ is a minimal support.

Conversely, let $S\in\Supp_{\TC,\D_G,\Path(s,t)}$. By the equivalence above, $G_S$ has a positive-length directed walk from $s$ to $t$, and therefore a simple directed $s$-$t$ path $P$. Then, $F(P)\subseteq S$ and $F(P)$ itself is a support. By inclusion-minimality of $S$, we must have $S=F(P)$. This proves the stated characterization of minimal supports.

For robustness, let $\Delta\subseteq\D_G$ and let $C=\{(u,v)\mid\Edge(u,v)\in\Delta\}$. By Theorem~\ref{thm:causality-robustness}, $\Delta$ destroys $\Path(s,t)$ if and only if it intersects every minimal support. By the support characterization, this is equivalent to the condition that $C$ intersects every simple directed $s$-$t$ path. This holds if and only if $(V,E\setminus C)$ has no directed path from $s$ to $t$, i.e., if and only if $C$ is a directed $s$-$t$ edge cut. Minimizing $|\Delta|=|C|$ gives $r_{\TC,\D_G,\Path(s,t)}=\lambda_G(s,t)$. 
\end{proof}

\begin{remark}
Theorem~\ref{thm:causality-robustness} applied to the path-support characterization gives the corresponding causal statements. An edge fact $\Edge(u,v)$ is an actual cause of $\Path(s,t)$ if and only if $(u,v)$ lies on some simple directed $s$-$t$ path, and it is a counterfactual cause if and only if $(u,v)$ lies on every simple directed $s$-$t$ path.
\end{remark}

Theorem~\ref{thm:reachability} complements Theorem~\ref{thm:separation}: although reachability can have exponentially many minimal supports, its deletion robustness is computable as a minimum directed edge cut, without enumerating those supports.

\begin{corollary}\label{cor:compact}
There are partitioned instances $(\emptyset,\D_r,A_r)$ for the fixed program $\TC$ such that
\[|\Supp_{\TC,\D_r,A_r}|=2^{\Omega(|\D_r|)},\]
while the robustness radius $r_{\TC,\D_r,A_r}$ is computable in polynomial time without enumerating the support family.
\end{corollary}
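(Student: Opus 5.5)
We reuse the layered instances from the second part of Theorem~\ref{thm:separation}. For $r\geq 1$ we take the layered DAG $G_r$ with layers $L_0=\{s\}$, $L_i=\{a_i,b_i\}$ for $1\leq i\leq r$, and $L_{r+1}=\{t\}$. We set $\D_r=\D_{G_r}$ with all facts endogenous and $A_r=\Path(s,t)$. That proof already gives $|\D_r|=4r$ and shows that the minimal supports are the edge-fact sets of the $2^r$ simple $s$-$t$ paths. Hence $|\Supp_{\TC,\D_r,A_r}|=2^r=2^{\Omega(|\D_r|)}$, and the first claim needs no new work.

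For the robustness radius we invoke Theorem~\ref{thm:reachability}. Its hypotheses hold: $s\neq t$, and $\TC\cup\D_r\models\Path(s,t)$ because $G_r$ contains an $s$-$t$ path. The theorem gives $r_{\TC,\D_r,A_r}=\lambda_{G_r}(s,t)$, the minimum directed $s$-$t$ edge cut. This is a quantity of the graph alone, with no reference to $\Supp$. We compute it by max-flow/min-cut: assign unit capacity to every edge and run a polynomial max-flow algorithm such as Edmonds--Karp. By Menger's theorem the maximum flow value equals $\lambda_{G_r}(s,t)$.

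It remains to note that extracting $G_r$ from $\D_r$ takes linear time, since each fact $\Edge(u,v)$ becomes one edge. The whole procedure is therefore polynomial in $|\D_r|$, and it never lists the exponentially many minimal supports. As a sanity check, the radius here equals $2$, because the two edges leaving $s$ form a cut and two edge-disjoint paths exist. This observation is not needed for the corollary, but it confirms the bound.

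The only point needing care is what ``without enumerating the support family'' means. We read it as an algorithmic statement: the algorithm's running time is bounded by a polynomial in $|\D_r|$, while $|\Supp|$ is exponential, so the algorithm cannot output or iterate over $\Supp$. No step is genuinely hard. The substantive work, the identification of the robustness radius with the minimum edge cut, is carried by Theorem~\ref{thm:reachability}.
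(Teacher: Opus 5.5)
Your proposal is correct and follows the paper's proof essentially step for step. You reuse the layered graphs $G_r$ with $|\mathcal{D}_r|=4r$ and $2^r$ path supports, invoke Theorem~\ref{thm:reachability} to identify the radius with $\lambda_{G_r}(s,t)$, and compute that value by unit-capacity max-flow without ever touching the support family; your side remark that the radius is exactly $2$ on this family is also right, though, as you note, it is not needed.
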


\noindent Corollary \ref{cor:compact} follows from Theorem~\ref{thm:reachability} by assigning unit capacity to every edge and applying a standard maximum-flow/minimum-cut algorithm \cite{ford1956maximal,ahuja2002combinatorial}.
\section{Hardness}
\label{sec:hardness}

We include one compact hardness calibration. The reachability result gives a tractable structured case; the next proposition shows that the threshold problem for deletion robustness is already NP-hard for positive nonrecursive Datalog, because it can express the problem of hitting conjunctive-query witnesses \cite{buneman2002propagation,karp2009reducibility}.

\begin{definition}
\label{def:destroy}
Let $\PiP$ be a positive Datalog program and let $A_0$ be a fixed Boolean intensional atom of $\PiP$. The problem $\Destroy(\PiP,A_0)$ takes as input disjoint finite $\EDB$-databases $\Dx,\Dn$ and an integer $k$, with
$\PiP\cup(\Dx\cup\Dn)\models A_0$, and asks whether there is
$\Delta\subseteq\Dn$ such that $|\Delta|\leq k$ and
\[\PiP\cup\bigl(\Dx\cup(\Dn\setminus\Delta)\bigr)\not\models A_0.
\]
\end{definition}

\begin{proposition}
\label{prop:nphard}
There exists a fixed positive nonrecursive Datalog program $\PiP$ and a fixed Boolean intensional atom $A_0$ such that $\Destroy(\PiP,A_0)$ is NP-hard in data complexity.
\end{proposition}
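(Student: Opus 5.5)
We reduce from Vertex Cover, which is NP-complete~\cite{karp2009reducibility}. We fix the nonrecursive program $\PiP$ with the single rule
\[
A_0 \leftarrow E(x,y),\ V(x),\ V(y),
\]
where $E$ and $V$ are extensional and $A_0$ is a nullary intensional predicate. Given a graph $G=(N,M)$ with at least one edge and an integer $k$, we put each undirected edge $\{u,v\}$ into the exogenous database as $E(u,v)$, so $\Dx=\{E(u,v)\mid \{u,v\}\in M\}$. The endogenous database is $\Dn=\{V(u)\mid u\in N\}$, and we keep the same $k$. This construction is clearly polynomial, and the program is fixed, so the reduction gives data-complexity hardness. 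Since $M\neq\emptyset$, we have $\PiP\cup(\Dx\cup\Dn)\models A_0$, so the promise in Definition~\ref{def:destroy} holds.

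Next we compute the minimal endogenous supports. Because $\PiP$ is a single CQ-style rule, we argue directly as in the first part of Theorem~\ref{thm:separation}. A set $S\subseteq\Dn$ entails $A_0$ together with $\Dx$ iff some valuation maps $E(x,y)$ to an exogenous edge fact $E(u,v)$ and maps $V(x),V(y)$ into $S$, that is, iff $\{V(u),V(v)\}\subseteq S$ for some edge $\{u,v\}\in M$. Hence
\[
\Supp_{\PiP,\D,A_0}=\bigl\{\{V(u),V(v)\}\mid \{u,v\}\in M\bigr\}.
\]
These sets are inclusion-minimal: a self-loop yields a singleton, which is still minimal, and two-element sets for distinct edges are incomparable.

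By Theorem~\ref{thm:causality-robustness}, a deletion $\Delta\subseteq\Dn$ destroys $A_0$ iff $\Delta$ is a transversal of this family. That is, $\Delta$ destroys $A_0$ iff $\{u\mid V(u)\in\Delta\}$ is a vertex cover of $G$. Therefore $(\Dx,\Dn,k)$ is a yes-instance of $\Destroy(\PiP,A_0)$ iff $G$ has a vertex cover of size at most $k$.

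The only delicate point is to keep the edge facts exogenous. If they were endogenous, an edge deletion would count toward the budget, and the equivalence with vertex cover would fail. With this choice the argument is routine.
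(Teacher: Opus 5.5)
Your reduction is correct and is essentially the paper's (same fixed rule, exogenous edge facts, endogenous vertex facts); the only differences are that you verify the equivalence through the transversal characterization of Theorem~\ref{thm:causality-robustness} rather than arguing both directions directly, and that you assume without comment what the paper spells out, namely that \textsc{Vertex Cover} stays NP-hard on graphs with at least one edge (add a fresh edge and raise $k$ by one). One correction to your closing remark: even with endogenous edge facts the equivalence would survive, since any deleted $E(u,v)$ can be swapped for $V(u)$ without increasing $|\Delta|$, so keeping edges exogenous is convenient rather than essential.
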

\begin{proof}
We reduce from \textsc{Vertex Cover} restricted to graphs with at least one edge. This restriction remains NP-hard \cite{karp2009reducibility}: map an arbitrary instance $(G,k)$ to the disjoint union of $G$ with one fresh edge and increase the bound to $k+1$. Given an undirected graph $G=(V,E)$ with $E\neq\emptyset$ and an integer $k$, use a binary $\EDB$ predicate $R$, whose facts will be exogenous, and a unary $\EDB$ predicate $P$, whose facts will be endogenous. For every vertex $u\in V$, include $P(u)$ in $\Dn$. For every undirected edge $\{u,v\}\in E$, choose one orientation $(u,v)$ and include $R(u,v)$ in $\Dx$. Let $\PiP$ consist of the single positive nonrecursive rule \[\Goal \leftarrow R(x,y),P(x),P(y),\]
with $\Goal$ nullary, and set $A_0=\Goal$. Since $E\neq\emptyset$, some fact $R(u,v)\in\Dx$ is present and both $P(u),P(v)\in\Dn$ are initially present. Thus, the promise $\PiP\cup(\Dx\cup\Dn)\models\Goal$ holds.

If $C\subseteq V$ is a vertex cover of size at most $k$, delete $\Delta_C=\{P(u):u\in C\}$. Suppose $\Goal$ were still derivable. Then for some oriented edge fact $R(a,b)\in\Dx$, both $P(a)$ and $P(b)$ would remain. Hence, neither endpoint of the corresponding undirected edge belongs to $C$, contradicting that $C$ is a vertex cover. Thus, $\Delta_C$ destroys $\Goal$, and $|\Delta_C|\leq k$.

Conversely, let $\Delta\subseteq\Dn$ with $|\Delta|\leq k$ destroy $\Goal$, and put $C_\Delta=\{u\in V:P(u)\in\Delta\}$. If some edge $\{u,v\}\in E$ had no endpoint in $C_\Delta$, then for its chosen orientation $(a,b)$, the facts $R(a,b)$, $P(a)$, and $P(b)$ would all remain after deleting $\Delta$. The rule would then derive $\Goal$, contradicting that $\Delta$ destroys it. Therefore, $C_\Delta$ is a vertex cover of size at most $k$. The reduction is polynomial and uses a fixed program and fixed goal atom, so $\Destroy(\PiP,A_0)$ is NP-hard in data complexity.
\end{proof}


\section{Related Work and Discussion}
\label{sec:related}

Database causality studies answers and non-answers through actual
causes, counterfactual causes, contingencies, and responsibility
\cite{meliou2010causality,meliou2010complexity}. Meliou et al. study causes
and responsibility for CQs in data complexity and identify tractable and
intractable cases for responsibility \cite{meliou2010complexity}. Meliou,
Roy, and Suciu relate causality to provenance, deletion propagation, why-not
queries, and database explanations \cite{meliou2014causality}. We use the
same tuple-deletion notions, but we study the support structure induced by
positive Datalog entailments. Our results show that recursion changes this
structure: fixed $\UCQ$s have bounded minimal supports, whereas one fixed
recursive program can have supports of unbounded size and exponential
multiplicity.

Structural-model accounts of causality and responsibility originate in the
work of Halpern and Pearl on actual causality and Chockler and Halpern on
responsibility and blame
\cite{halpern2005causes,chockler2004responsibility}. Database causality
specializes these notions to endogenous tuple interventions
\cite{meliou2010causality,meliou2010complexity}. We keep this intervention
model, but we restrict interventions to deletions of endogenous extensional
facts under a fixed positive Datalog program. Under this monotone deletion
semantics, our support hypergraph exactly determines actual causes,
counterfactual causes, responsibility, and deletion robustness.

Datalog already appears in the causality literature through abduction, view
updates, and integrity constraints. Salimi and Bertossi relate answer causality to Datalog abduction
and view-update problems, including settings with integrity constraints \cite{salimi2015query,salimi2016causes,bertossi2017causes}.
We instead contribute a
support-level analysis: we distinguish nonrecursive and recursive programs by the
size and number of minimal supports, prove invariance under fixed-goal
equivalent programs, and show that all-endogenous positive-length reachability
reduces robustness to a directed edge-cut problem.

Provenance, lineage, and semiring annotations describe how query results depend on input data \cite{buneman2001and,green2007provenance}. Datalog provenance has been studied through
proof trees, circuits, semiring semantics, and the complexity of why-provenance
\cite{deutch2014circuits,bourgaux2022revisiting,calautti2024complexity}.
Our support hypergraphs are coarser than these provenance objects: they discard
rule decompositions, proof-tree multiplicities, and annotations. We make this
loss deliberately. For deletion-based explanation, it suffices to retain exactly
the inclusion-minimal endogenous fact sets that entail the answer.

Deletion propagation and resilience ask how input tuples must be removed to
eliminate query answers or view tuples
\cite{buneman2002propagation,freire2015complexity}. Tiresias and related how-to
systems compute database updates that achieve desired query-output changes
under constraints \cite{meliou2012tiresias}. 
Our robustness question is
narrower: we delete only endogenous facts and ask when one already-derived
positive Datalog entailment disappears. The support hypergraph reduces this
question to a transversal problem.

\section{Conclusion}\label{sec:conclusion}
We have introduced deletion-based explanation for positive Datalog entailments through
inclusion-minimal endogenous supports. We isolated the support hypergraph as a
semantic abstraction that captures the information needed for causality,
responsibility, and deletion robustness, while abstracting from derivation
multiplicity. Our results separate nonrecursive and recursive rule evaluation at
the level of minimal supports. While UCQs yield bounded and polynomially
many minimal supports, positive-length transitive closure can yield supports of
unbounded size and exponential multiplicity. In all-endogenous reachability,
deletion robustness nevertheless coincides with the minimum directed
$s$-$t$ edge cut.

For future work, several directions remain. One is to identify recursive Datalog
fragments that admit compact representations of minimal supports. Another is to
extend the analysis to stratified negation and weighted deletion costs. A third
is to study when provenance circuits can be reduced to support hypergraphs while
preserving deletion-relevant quantities. Answering these questions would clarify the representational and algorithmic scope of support-based explanation in richer rule languages.

\section*{Acknowledgments.}
This work is funded by the German Research Foundation (DFG) -- SFB 1574 Circular Factory-- 471687386.

\section*{Disclosure of Interests}
The authors have no competing interests to declare.
\bibliographystyle{splncs04}
\bibliography{ref}

\vspace{1cm}
\appendix

\section{Extended Proofs}
\label{app:proofs}
We use the notation of the main text: $\PiP$ is a positive Datalog program, $\Dx$ and $\Dn$ are disjoint finite sets of extensional facts, $\D=\Dx\cup\Dn$, and $A$ is a ground atom with $\PiP\cup\D\models A$. Entailment is evaluated over the active-domain grounding determined by the constants occurring in the program and in the retained extensional instance. Whenever we consider a retained set $E\subseteq\Dn$, the associated database is $\Dx\cup E$; hence all support, causality, and robustness arguments are over the same endogenous universe $\Dn$.

\subsection*{Proof of Proposition~\ref{prop:decomposition}}

We first prove that $\Supp_{\PiP,\D,A}$ is a finite antichain. By definition, every member of $\Supp_{\PiP,\D,A}$ is a subset of $\Dn$. Since $\Dn$ is finite, the powerset $2^{\Dn}$ is finite. Hence, $\Supp_{\PiP,\D,A}\subseteq 2^{\Dn}$ is finite. If $S_1,S_2\in\Supp_{\PiP,\D,A}$ and $S_1\subsetneq S_2$, then $S_2$ is not inclusion-minimal among endogenous supports, because $S_1$ is a proper subset of $S_2$ and $\PiP\cup\Dx\cup S_1\models A$. This contradicts the definition of minimal support. Therefore, no two distinct supports in $\Supp_{\PiP,\D,A}$ are comparable by proper inclusion.

We now prove the decomposition equivalence. Let $E\subseteq\Dn$.

Assume first that $\PiP\cup\Dx\cup E\models A$. Consider the family
\[
  \mathcal{F}_E=
  \{\,F\subseteq E\mid \PiP\cup\Dx\cup F\models A\,\}.
\]
This family is nonempty because $E\in\mathcal{F}_E$. It is finite because $E$ is finite. Hence, it contains an inclusion-minimal member; call it $S$. Then $S\subseteq E$ and $\PiP\cup\Dx\cup S\models A$. We must show that $S$ is minimal among all subsets of $\Dn$, not merely among subsets of $E$. Suppose, for a contradiction, that some $S'\subsetneq S$ satisfies $\PiP\cup\Dx\cup S'\models A$. Since $S'\subsetneq S\subseteq E$, we have $S'\in\mathcal{F}_E$, contradicting the choice of $S$ as an inclusion-minimal member of $\mathcal{F}_E$. Thus, $S\in\Supp_{\PiP,\D,A}$ and $S\subseteq E$.

Conversely, assume that there exists $S\in\Supp_{\PiP,\D,A}$ such that $S\subseteq E$. By definition of support, $\PiP\cup\Dx\cup S\models A$. We distinguish the extensional and intensional cases for $A$.

If $A$ is an extensional atom, then the semantics of $\PiP\cup\Dx\cup S\models A$ gives $A\in\Dx\cup S$. Since $S\subseteq E$, every fact in $S$ is also in $E$. Hence, $A\in\Dx\cup E$, and therefore $\PiP\cup\Dx\cup E\models A$.

Suppose now that $A$ is intensional. Let $I^S_0=\emptyset$ and
\[
  I^S_{i+1}=I^S_i\cup T_{\PiP,\Dx\cup S}(I^S_i)
\]
be the Kleene sequence for $\PiP$ over $\Dx\cup S$. Similarly, let $I^E_0=\emptyset$ and
\[
  I^E_{i+1}=I^E_i\cup T_{\PiP,\Dx\cup E}(I^E_i)
\]
be the Kleene sequence over $\Dx\cup E$. Since $S\subseteq E$, the active domain of $\PiP$ with $\Dx\cup S$ is contained in the active domain of $\PiP$ with $\Dx\cup E$. Therefore, every ground rule instance available over $\Dx\cup S$ is also a ground rule instance available over $\Dx\cup E$.

We prove by induction on $i\geq 0$ that $I^S_i\subseteq I^E_i$. For $i=0$, both sets are empty. For the induction step, assume $I^S_i\subseteq I^E_i$, and let $B\in I^S_{i+1}$. By definition of $I^S_{i+1}$, either $B\in I^S_i$ or $B\in T_{\PiP,\Dx\cup S}(I^S_i)$. In the first case, the induction hypothesis gives $B\in I^E_i\subseteq I^E_{i+1}$. In the second case, there is a ground instance of a rule of $\PiP$ with head $B$ such that every extensional body atom is in $\Dx\cup S$ and every intensional body atom is in $I^S_i$. Because $S\subseteq E$ and $I^S_i\subseteq I^E_i$, the same ground rule instance has all extensional body atoms in $\Dx\cup E$ and all intensional body atoms in $I^E_i$. Hence, $B\in T_{\PiP,\Dx\cup E}(I^E_i)\subseteq I^E_{i+1}$. This proves $I^S_{i+1}\subseteq I^E_{i+1}$.

The least fixpoint over a finite active domain is the union of the finite Kleene sequence at its stabilization point. Since $\PiP\cup\Dx\cup S\models A$ and $A$ is intensional, $A\in I^S_j$ for some stage $j$. The inclusion $I^S_j\subseteq I^E_j$ gives $A\in I^E_j$, and therefore $\PiP\cup\Dx\cup E\models A$. The two directions prove the claimed equivalence.

\subsection*{Proof of Theorem~\ref{thm:separation}}

We first prove the claim for fixed UCQs. Let $q$ be a fixed $\UCQ$, and let $m(q)$ be the maximum number of atoms in a disjunct of $q$. Let $\D=\Dx\cup\Dn$ be a finite database, and let $\bar a$ be an answer to $q$. Let $S\subseteq\Dn$ be a minimal endogenous support for $q(\bar a)$.

Since $S$ is a support, $\Dx\cup S\models q(\bar a)$. Hence, some disjunct $q_j$ of $q$ has a satisfying valuation $h$ over $\Dx\cup S$ that maps the free variables of $q_j$ to $\bar a$. For each body atom $R(\bar z)$ of $q_j$, the ground atom $R(h(\bar z))$ belongs to $\Dx\cup S$. Let
\[
  W=\{\,R(h(\bar z))\in S \mid R(\bar z)\text{ is a body atom of }q_j\,\}.
\]
The set $W$ contains exactly the endogenous facts among the facts used by this valuation. The valuation has at most one ground fact for each body atom of $q_j$. Since $q_j$ has at most $m(q)$ atoms, $|W|\leq m(q)$. Every extensional fact required by the valuation is either exogenous, hence in $\Dx$, or endogenous and included in $W$. Therefore, the same valuation witnesses $\Dx\cup W\models q(\bar a)$. Thus, $W$ is a support and $W\subseteq S$. Since $S$ is inclusion-minimal among supports, $S=W$. Hence, $|S|\leq m(q)$.

Every minimal support for $q(\bar a)$ is therefore a subset of $\Dn$ of cardinality at most $m(q)$. The number of such subsets is bounded by
\[
  \sum_{i=0}^{m(q)}\binom{|\Dn|}{i}.
\]
Because $q$ is fixed, $m(q)$ is constant in data complexity, and this expression is polynomial in $|\Dn|$.

We now prove the recursive separation. Let $r\geq 1$. The graph $G_r=(V_r,F_r)$ has layers
\[
  L_0=\{s\},\qquad L_i=\{a_i,b_i\}\text{ for }1\leq i\leq r,
  \qquad L_{r+1}=\{t\},
\]
and contains every directed edge from $L_i$ to $L_{i+1}$ for $0\leq i\leq r$. There are two edges from $L_0$ to $L_1$, two edges from $L_r$ to $L_{r+1}$, and four edges from $L_i$ to $L_{i+1}$ for each $1\leq i<r$. Thus, $|F_r|=2+4(r-1)+2=4r$. Let
\[
  \D_r=\{\,\Edge(u,v)\mid (u,v)\in F_r\,\}
\]
be the all-endogenous database, and let $A_r=\Path(s,t)$.

We need the following reachability characterization for $\TC$. For any $S\subseteq\D_r$, let $H_S$ be the directed graph whose edges are the pairs $(u,v)$ such that $\Edge(u,v)\in S$. We claim that
\[
\begin{aligned}
  \TC\cup S\models\Path(x,y)
  \,\,\,\Longleftrightarrow\,\,\,{}&
  H_S\text{ contains a positive-length directed walk}\\
  &\text{from }x\text{ to }y.
\end{aligned}
\]
For the left-to-right direction, define the Kleene sequence $I_0=\emptyset$ and $I_{i+1}=I_i\cup T_{\TC,S}(I_i)$. We prove by induction on $i$ that every atom $\Path(x,y)\in I_i$ corresponds to a positive-length directed walk from $x$ to $y$ in $H_S$ of length at most $i$. The case $i=0$ is vacuous. For the step, let $\Path(x,y)\in I_{i+1}$. If it already belongs to $I_i$, the induction hypothesis applies. Otherwise, it is produced by a ground rule instance. If the base rule is used, then $\Edge(x,y)\in S$, so the single edge $(x,y)$ is a walk of length $1$. If the recursive rule is used, then for some $z$, $\Edge(x,z)\in S$ and $\Path(z,y)\in I_i$. By the induction hypothesis, there is a walk from $z$ to $y$ of length at most $i$; prefixing the edge $(x,z)$ gives a walk from $x$ to $y$ of length at most $i+1$.

For the right-to-left direction, we prove by induction on the length $\ell\geq 1$ of a directed walk. If $\ell=1$, the walk is one edge $(x,y)$, so $\Edge(x,y)\in S$ and the base rule derives $\Path(x,y)$ at stage $1$. If $\ell>1$, we write the walk as
\[
  x=v_0,v_1,\ldots,v_\ell=y.
\]
The suffix $v_1,\ldots,v_\ell$ has length $\ell-1$. By the induction hypothesis, $\Path(v_1,y)$ is derived by stage $\ell-1$. Since $\Edge(x,v_1)\in S$, the recursive rule derives $\Path(x,y)$ by stage $\ell$.

For distinct endpoints, a positive-length directed walk contains a simple directed path between the same endpoints: whenever a vertex repeats, delete the closed segment between two consecutive occurrences of that vertex; this strictly decreases the walk length and preserves the endpoints. Repeating this operation terminates and yields a walk with no repeated vertices, i.e., a simple directed path.

We now identify the minimal supports for $A_r$. If $P$ is a simple directed $s$-$t$ path in $G_r$, then its edge-fact set $F(P)$ entails $\Path(s,t)$ by the characterization above. No proper subset of $F(P)$ entails $\Path(s,t)$: removing any edge of the path breaks the only directed route available in the graph whose edge set is contained in $F(P)$. Hence, $F(P)$ is a minimal support. Conversely, if $S$ is a minimal support for $A_r$, then $H_S$ contains a positive-length directed walk from $s$ to $t$, and therefore a simple directed $s$-$t$ path $P$. The set $F(P)$ is contained in $S$ and is itself a support. By minimality, $S=F(P)$.

Every simple directed $s$-$t$ path in $G_r$ chooses exactly one vertex from each intermediate layer $L_i$, $1\leq i\leq r$, and then uses one edge between each pair of consecutive layers. Hence, the number of such paths is $2^r$, and every such path has $r+1$ edges. Therefore,
\[
  |\Supp_{\TC,\D_r,A_r}|=2^r,
  \qquad
  \max\{\,|S|\mid S\in\Supp_{\TC,\D_r,A_r}\,\}=r+1.
\]
Since $|\D_r|=|F_r|=4r$, we obtain $r+1=\Theta(|\D_r|)$ and $2^r=2^{\Omega(|\D_r|)}$.

\subsection*{Proof of Theorem~\ref{thm:causality-robustness}}

Let $\mathcal{S}=\Supp_{\PiP,\D,A}$. We prove the three assertions separately.

For actual causes, assume first that $\tau\in\Dn$ is an actual cause. Then there is a contingency set $\Gamma\subseteq\Dn\setminus\{\tau\}$ such that
\[
  \PiP\cup(\D\setminus\Gamma)\models A
  \quad\text{and}\quad
  \PiP\cup(\D\setminus(\Gamma\cup\{\tau\}))\not\models A.
\]
Let $E=\Dn\setminus\Gamma$. Since $\D\setminus\Gamma=\Dx\cup E$, Proposition~\ref{prop:decomposition} gives a support $S\in\mathcal{S}$ with $S\subseteq E$. If $\tau\notin S$, then $S\subseteq E\setminus\{\tau\}$. Applying Proposition~\ref{prop:decomposition} again with $E\setminus\{\tau\}$ gives
\[
  \PiP\cup\Dx\cup(E\setminus\{\tau\})\models A.
\]
But $\Dx\cup(E\setminus\{\tau\})=\D\setminus(\Gamma\cup\{\tau\})$, contradicting the non-entailment condition above. Therefore, $\tau\in S$.

Conversely, suppose that $\tau\in S$ for some $S\in\mathcal{S}$. Let $\Gamma=\Dn\setminus S$. Since $\tau\in S$, we have $\Gamma\subseteq\Dn\setminus\{\tau\}$. Also, $\D\setminus\Gamma=\Dx\cup S$, so $\PiP\cup(\D\setminus\Gamma)\models A$. After also deleting $\tau$, the retained endogenous set is $S\setminus\{\tau\}$. This is a proper subset of $S$, and $S$ is a minimal support. Hence,
\[
  \PiP\cup\Dx\cup(S\setminus\{\tau\})\not\models A.
\]
Since $\Dx\cup(S\setminus\{\tau\})=\D\setminus(\Gamma\cup\{\tau\})$, the set $\Gamma$ is a contingency for $\tau$. Therefore, $\tau$ is an actual cause.

For counterfactual causes, by definition $\tau$ is counterfactual exactly when
\[
  \PiP\cup\Dx\cup(\Dn\setminus\{\tau\})\not\models A.
\]
By Proposition~\ref{prop:decomposition}, the corresponding entailment holds exactly when there is a support $S\in\mathcal{S}$ with $S\subseteq\Dn\setminus\{\tau\}$. This containment is equivalent to $\tau\notin S$. Hence, non-entailment holds exactly when no minimal support avoids $\tau$, i.e., exactly when every $S\in\mathcal{S}$ contains $\tau$.

For robustness, let $\Delta\subseteq\Dn$ and put $E=\Dn\setminus\Delta$. By Proposition~\ref{prop:decomposition},
\[
  \PiP\cup(\D\setminus\Delta)
  =\PiP\cup\Dx\cup E
  \models A
\]
holds exactly when some $S\in\mathcal{S}$ satisfies $S\subseteq E$. Since $E=\Dn\setminus\Delta$, the inclusion $S\subseteq E$ is equivalent to $S\cap\Delta=\emptyset$. Therefore, $\Delta$ destroys $A$ exactly when $\Delta\cap S\neq\emptyset$ for every $S\in\mathcal{S}$. Such sets $\Delta$ are precisely the transversals of $\mathcal{S}$. Minimizing their cardinality gives
\[
  r_{\PiP,\D,A}=\trnum(\mathcal{S}).
\]
If $\emptyset\in\mathcal{S}$, then no $\Delta\subseteq\Dn$ intersects every support, because no set intersects $\emptyset$. In that case both the deletion robustness radius and the transversal number are $\Inf$ (by the conventions as in the main text).

\subsection*{Proof of Proposition~\ref{prop:responsibility}}

Let
\[ \mathcal{S}^+_\tau=\{\,S\in\Supp_{\PiP,\D,A}\mid\tau\in S\,\},
  \qquad
  \mathcal{S}^-_\tau=\{\,S\in\Supp_{\PiP,\D,A}\mid\tau\notin S\,\}.
\]
If $\mathcal{S}^+_\tau=\emptyset$, then Theorem~\ref{thm:causality-robustness} implies that $\tau$ is not an actual cause. By Definition~\ref{def:causality}, its responsibility is then $0$.

Assume $\mathcal{S}^+_\tau\neq\emptyset$. We characterize exactly the contingency sets for $\tau$. Let $\Gamma\subseteq\Dn\setminus\{\tau\}$. The set $\Gamma$ is a contingency for $\tau$ iff both conditions below hold:
\[
  \PiP\cup(\D\setminus\Gamma)\models A,
  \qquad
  \PiP\cup(\D\setminus(\Gamma\cup\{\tau\}))\not\models A.
\]
By Proposition~\ref{prop:decomposition}, the first condition holds iff there exists a support $S\in\Supp_{\PiP,\D,A}$ such that $S\cap\Gamma=\emptyset$. The second condition holds iff no support is contained in $\Dn\setminus(\Gamma\cup\{\tau\})$. A support containing $\tau$ is never contained in $\Dn\setminus\{\tau\}$, so only supports not containing $\tau$ can violate the second condition. Thus, the second condition is equivalent to
\[
  \Gamma\cap S^-\neq\emptyset
  \quad\text{for every }S^-\in\mathcal{S}^-_\tau.
\]
If this hitting condition holds, then no support in $\mathcal{S}^-_\tau$ can be disjoint from $\Gamma$. Therefore, the support preserved by the first condition must be some $S^+\in\mathcal{S}^+_\tau$. Preserving this support means $S^+\cap\Gamma=\emptyset$, equivalently $\Gamma\subseteq\Dn\setminus S^+$.

Consequently, the contingency sets for $\tau$ are exactly the sets $\Gamma$ for which there exists $S^+\in\mathcal{S}^+_\tau$ such that
\[
  \Gamma\subseteq\Dn\setminus S^+,
  \qquad
  \Gamma\cap S^-\neq\emptyset\text{ for every }S^-\in\mathcal{S}^-_\tau.
\]
The smallest contingency size is therefore the minimum stated in the proposition. Substituting this value into the database-causality definition of responsibility gives the stated formula.

\subsection*{Proof of Theorem~\ref{thm:program-invariance} and Corollary~\ref{cor:support-invariance}}
We first prove Theorem~\ref{thm:program-invariance}. Let $S\subseteq\Dn$ be
arbitrary. Since $\Dx$ and $\Dn$ are finite $\EDB$-databases, the retained
instance $\Dx\cup S$ is a finite $\EDB$-database. By the fixed-goal equivalence
assumption, each program is evaluated over its own active-domain grounding, and
the following equivalence holds for this retained database:
\[
  \PiP_1\cup\Dx\cup S\models A
  \,\,\,\Longleftrightarrow\,\,\,
  \PiP_2\cup\Dx\cup S\models A .
\]
By Definition~\ref{def:support}, the left-hand side says exactly that $S$ is an endogenous support for $A$ with respect to $\PiP_1$ over the partition $(\Dx,\Dn)$. The right-hand side says exactly that the same set $S$ is an endogenous support for $A$ with respect to $\PiP_2$ over the same partition. Hence, for every $S\subseteq\Dn$, $S$ belongs to the full support family of $\PiP_1$ iff it belongs to the full support family of $\PiP_2$.

Let
\[
  \mathcal{T}_i=\{\,S\subseteq\Dn \mid \PiP_i\cup\Dx\cup S\models A\,\}
  \qquad(i\in\{1,2\})
\]
be the full family of endogenous supports before taking inclusion-minimal members. The previous paragraph proves $\mathcal{T}_1=\mathcal{T}_2$. Minimal endogenous supports are precisely the inclusion-minimal members of these finite families, and inclusion is taken over the same universe $\Dn$ in both cases. Therefore, the sets of inclusion-minimal members are equal:
\[
  \Supp_{\PiP_1,\D,A}=\Supp_{\PiP_2,\D,A}.
\]
This proves Theorem~\ref{thm:program-invariance}.

We now prove Corollary~\ref{cor:support-invariance}. Assume
$\Supp_{\PiP_1,\D,A}=\Supp_{\PiP_2,\D,A}=\mathcal{S}$.
Both programs are evaluated over the same endogenous universe $\Dn$. For any $\tau\in\Dn$, Theorem~\ref{thm:causality-robustness} characterizes actual causality by the condition
\[
  \exists S\in\mathcal{S}\text{ such that }\tau\in S,
\]
and counterfactual causality by the condition
$\forall S\in\mathcal{S},\ \tau\in S$. Since these two conditions mention only $\tau$ and the common family $\mathcal{S}$, the two programs have the same actual causes and the same counterfactual causes.

The same theorem gives the deletion robustness radius as the transversal number of the common support family:
\[
  r_{\PiP_i,\D,A}=\trnum(\mathcal{S})\qquad(i\in\{1,2\}).
\]
Thus, the robustness radii coincide. It remains only to check responsibility. For any fixed $\tau\in\Dn$, Proposition~\ref{prop:responsibility} uses:
\[
  \mathcal{S}^{+}_{\tau}=\{S\in\mathcal{S}\mid\tau\in S\},
  \qquad
  \mathcal{S}^{-}_{\tau}=\{S\in\mathcal{S}\mid\tau\notin S\},
\]
and the same ambient set $\Dn$ to compute the minimum contingency size. Since $\mathcal{S}$ and $\Dn$ are common to both programs, the feasible contingency sets and their minimum cardinality are the same for $\PiP_1$ and $\PiP_2$. Therefore, the responsibility value of every $\tau\in\Dn$ is the same for both programs. This proves Corollary~\ref{cor:support-invariance}.

\subsection*{Proof of Proposition~\ref{prop:proof-tree-divergence}}

Let $\PiP=\TC$ and
\[
  \D=\{\Edge(s,a),\Edge(a,a),\Edge(a,t)\},
\]
with all facts endogenous. Let $A=\Path(s,t)$. We use ordinary finite proof trees: each internal node is justified by one ground rule instance, leaves are extensional facts, and repeated occurrences of
the same intensional atom are not identified.

For every integer $k\geq 0$, we construct a finite proof tree for $\Path(s,t)$. The root is labeled $\Path(s,t)$ and is justified by the ground recursive rule instance
\[
  \Path(s,t)\leftarrow \Edge(s,a),\Path(a,t).
\]
Thus, the root has one extensional child labeled $\Edge(s,a)$ and one intensional child labeled $\Path(a,t)$. If $k=0$, close this intensional child by the base rule instance
\[
  \Path(a,t)\leftarrow \Edge(a,t).
\]
If $k>0$, expand the node labeled $\Path(a,t)$ by the recursive rule instance
\[
  \Path(a,t)\leftarrow \Edge(a,a),\Path(a,t)
\]
exactly $k$ consecutive times. After these $k$ recursive expansions, close the remaining node labeled $\Path(a,t)$ by the base rule instance $\Path(a,t)\leftarrow\Edge(a,t)$. After the final base-rule closure, every leaf of the finite tree is one of the extensional facts in $\D$: one occurrence of $\Edge(s,a)$, $k$ occurrences of $\Edge(a,a)$, and one occurrence of $\Edge(a,t)$. Hence, each tree is a valid finite proof tree. The tree for $k+1$ has one more recursive expansion below $\Path(a,t)$ than the tree for $k$, so the trees have distinct depths. Therefore, there are infinitely many finite proof trees.

We now compute the minimal endogenous supports. The set
\[
  S_0=\{\Edge(s,a),\Edge(a,t)\}
\]
entails $\Path(s,t)$: first the base rule derives $\Path(a,t)$ from $\Edge(a,t)$, and then the recursive rule derives $\Path(s,t)$ from $\Edge(s,a)$ and $\Path(a,t)$. No proper subset of $S_0$ entails $\Path(s,t)$. If $\Edge(s,a)$ is absent, no positive-length walk can leave $s$. If $\Edge(a,t)$ is absent, no positive-length walk can reach $t$. The remaining fact $\Edge(a,a)$ is only a self-loop at $a$ and cannot connect $s$ to $t$ without both $\Edge(s,a)$ and $\Edge(a,t)$.

Conversely, any support for $\Path(s,t)$ must contain both facts $\Edge(s,a)$ and $\Edge(a,t)$. Indeed, every positive-length directed walk from $s$ to $t$ in the directed graph with edge set corresponding to $\D$ has the form
\[
  s,a,\underbrace{a,\ldots,a}_{\text{zero or more repetitions}},t,
\]
and therefore uses $\Edge(s,a)$ and $\Edge(a,t)$. Hence, the unique minimal endogenous support is
\[
  \Supp_{\TC,\D,\Path(s,t)}=\{\,S_0\,\}.
\]
The loop fact $\Edge(a,a)$ occurs in the proof trees constructed above when $k>0$, but it is not contained in the unique minimal support.

\subsection*{Proof of Theorem~\ref{thm:reachability}}

Let $G=(V,E)$ be a finite directed graph, let $\D_G=\{\Edge(u,v):(u,v)\in E\}$, and let $S\subseteq\D_G$. Let $G_S=(V,E_S)$ be the directed graph whose edge set corresponds to $S$, where
\[
  E_S=\{\,(u,v)\mid\Edge(u,v)\in S\,\}.
\]
Let $I_0=\emptyset$ and $I_{i+1}=I_i\cup T_{\TC,S}(I_i)$ be the Kleene sequence for $\TC$ over the extensional database $S$.

We first prove soundness of the fixpoint computation with respect to directed walks. For every $i\geq 0$, if $\Path(a,b)\in I_i$, then $G_S$ contains a positive-length directed walk from $a$ to $b$ of length at most $i$. The proof is by induction on $i$. For $i=0$, $I_0=\emptyset$, so there is nothing to prove. Assume the claim holds at stage $i$, and let $\Path(a,b)\in I_{i+1}$. If $\Path(a,b)\in I_i$, the induction hypothesis gives the required walk. Otherwise, $\Path(a,b)$ is newly produced by $T_{\TC,S}(I_i)$. If the base rule produced it, then $\Edge(a,b)\in S$, so $(a,b)\in E_S$ is a walk of length $1\leq i+1$. If the recursive rule produced it, then for some constant $c$, $\Edge(a,c)\in S$ and $\Path(c,b)\in I_i$. The induction hypothesis gives a walk from $c$ to $b$ of length at most $i$. Prefixing the edge $(a,c)$ gives a walk from $a$ to $b$ of length at most $i+1$.

We next prove completeness. For every positive-length directed walk from $a$ to $b$ in $G_S$ of length $\ell\geq 1$, the atom $\Path(a,b)$ belongs to $I_\ell$. The proof is by induction on $\ell$. If $\ell=1$, the walk is the edge $(a,b)$, so $\Edge(a,b)\in S$, and the base rule gives $\Path(a,b)\in I_1$. If $\ell>1$, let the walk as
\[
  a=v_0,v_1,\ldots,v_\ell=b.
\]
The suffix $v_1,\ldots,v_\ell$ is a positive-length directed walk from $v_1$ to $b$ of length $\ell-1$. By the induction hypothesis, $\Path(v_1,b)\in I_{\ell-1}$. Since $\Edge(a,v_1)\in S$, the recursive rule derives $\Path(a,b)$ at stage $\ell$, so $\Path(a,b)\in I_\ell$.

The two inductions prove
\[
\begin{aligned}
  \TC\cup S\models\Path(a,b)
  \,\,\,\Longleftrightarrow\,\,\,{}&
  G_S\text{ contains a positive-length directed walk}\\
  &\text{from }a\text{ to }b.
\end{aligned}
\]
When $a\neq b$, every positive-length directed walk from $a$ to $b$ contains a simple directed $a$-$b$ path. To see this, if a walk repeats a vertex, remove the segment between two equal occurrences of that vertex. The remaining sequence is still a directed walk with the same endpoints and strictly smaller length. Repeating this finite shortening process yields a walk with no repeated vertices, i.e., a simple directed path.

We now specialize to distinct vertices $s,t\in V$. Let $P$ be a simple
directed path from $s$ to $t$ in $G$, and let
$F(P)=\{\Edge(u,v)\mid (u,v)\text{ is an edge of }P\}$. By completeness,
$F(P)$ entails $\Path(s,t)$. We prove that $F(P)$ is inclusion-minimal. Assume the path as
\[
  s=v_0,v_1,\ldots,v_m=t,
\]
where the vertices are pairwise distinct and the path edges are $(v_{j-1},v_j)$ for $1\leq j\leq m$. If one edge $(v_{j-1},v_j)$ is removed from this path-edge set, then no edge remains that goes from any vertex among $v_0,\ldots,v_{j-1}$ to any vertex among $v_j,\ldots,v_m$ inside the subgraph whose edges are the remaining path edges. Therefore, there is no directed walk from $s$ to $t$ using only the remaining path edges. By the equivalence above, the remaining facts do not entail $\Path(s,t)$. Hence, every proper subset obtained by deleting at least one path edge is not a support, and $F(P)$ is a minimal support.

Conversely, let $S\in\Supp_{\TC,\D_G,\Path(s,t)}$. Since $S$ is a support, $\TC\cup S\models\Path(s,t)$. By the equivalence above, $G_S$ contains a positive-length directed walk from $s$ to $t$. Since $s\neq t$, this walk contains a simple directed $s$-$t$ path $P$. All edge facts of $P$ belong to $S$, so $F(P)\subseteq S$. By completeness, $F(P)$ is a support. Since $S$ is inclusion-minimal among supports, $S=F(P)$. This proves
\[
  \Supp_{\TC,\D_G,\Path(s,t)}
  =\{\,F(P)\mid P\text{ is a simple directed }s\text{-}t\text{ path in }G\,\}.
\]

It remains to prove the cut identity. Let $\Delta\subseteq\D_G$, and let
\[
  C=\{\,(u,v)\mid\Edge(u,v)\in\Delta\,\}\subseteq E.
\]
By Theorem~\ref{thm:causality-robustness}, $\Delta$ destroys $\Path(s,t)$ iff it intersects every minimal support. By the support characterization just proved, this holds iff $C$ intersects the edge set of every simple directed $s$-$t$ path. This is equivalent to saying that $E\setminus C$ contains no directed $s$-$t$ path: if a simple path avoids $C$, then it remains after deleting $C$; conversely, if a directed $s$-$t$ path remains after deleting $C$, it contains a simple directed $s$-$t$ subpath that also avoids $C$. Therefore, $C$ is a directed $s$-$t$ edge cut exactly when $\Delta$ destroys $\Path(s,t)$. Since the correspondence $\Delta\mapsto C$ preserves cardinality, minimizing $|\Delta|$ over destroying deletion sets gives
\[ r_{\TC,\D_G,\Path(s,t)}=\lambda_G(s,t).
\]
This proves the support characterization and the cut identity in
Theorem~\ref{thm:reachability}.

\subsection*{Proof of Corollary~\ref{cor:compact}}

We use the layered instances $(\emptyset,\D_r,A_r)$ constructed in the proof of Theorem~\ref{thm:separation}. In that construction, $A_r=\Path(s,t)$, all edge facts are endogenous, and $\D_r$ is the set of edge facts of the layered graph $G_r$. The graph has two edges from the source layer to the first intermediate layer, two edges from the last intermediate layer to the target, and four edges between each consecutive pair of intermediate layers. Hence, the number of endogenous facts is
\[
  |\D_r|=4r.
\]
Thus, polynomial time in the size of $G_r$ is polynomial time in the database
size, because $\D_r$ contains exactly one fact for each edge of $G_r$.

Each simple directed $s$-$t$ path in $G_r$ chooses exactly one of the two vertices in each of the $r$ intermediate layers. These choices are independent, so $G_r$ has $2^r$ simple directed $s$-$t$ paths. By Theorem~\ref{thm:reachability}, the minimal endogenous supports of $A_r$ are exactly the edge-fact sets of these simple paths. Therefore,
\[
  |\Supp_{\TC,\D_r,A_r}|=2^r=2^{\Omega(|\D_r|)}.
\]

The same theorem gives $r_{\TC,\D_r,A_r}=\lambda_{G_r}(s,t)$, where
$\lambda_{G_r}(s,t)$ is the minimum cardinality of a directed $s$-$t$ edge cut.
Assign unit capacity to every edge of $G_r$. By the maximum-flow/minimum-cut
theorem, the value of a maximum $s$-$t$ flow equals the minimum cardinality of a
directed $s$-$t$ edge cut \cite{ford1956maximal,ahuja2002combinatorial}.
Standard maximum-flow algorithms compute this value in time polynomial in the
number of vertices and edges of $G_r$, and hence in time polynomial in
$|\D_r|$. This computation uses $G_r$ as a graph instance and never enumerates
its $2^r$ simple paths. Consequently, the robustness value is computable in
polynomial time even though the support family is exponential.

\subsection*{Proof of Proposition~\ref{prop:nphard}}
We give a many-one reduction from \textsc{Vertex Cover}. The source problem
asks, given an undirected graph $G=(V,E)$ and an integer $k$, whether there is a
set $C\subseteq V$ with $|C|\leq k$ such that every edge in $E$ has at least one
endpoint in $C$. We use the restriction to graphs with at least one edge. This
restriction remains NP-hard. Indeed, map an arbitrary instance $(G,k)$ to the
disjoint union of $G$ with one fresh edge $\{p,q\}$ and replace the bound by
$k+1$; Vertex Cover is NP-complete \cite{karp2009reducibility}. If $G$ has a
vertex cover of size at most $k$, then adding one of $p,q$ gives a cover of the
new graph of size at most $k+1$.

Conversely, any cover of the new graph of size at most $k+1$ must contain at
least one of $p,q$ to cover the fresh edge. Removing one selected endpoint of
that fresh edge from the cover leaves a vertex cover of the original graph $G$
of size at most $k$.

Let $(G,k)$ be an instance of this restricted problem, with $G=(V,E)$ and
$E\neq\emptyset$. We construct an instance of $\Destroy(\PiP,A_0)$. The fixed
positive nonrecursive Datalog program $\PiP$ uses a binary extensional predicate
$R$, a unary extensional predicate $P$, and a nullary intensional predicate
$\Goal$. It consists of the single rule
\[
  \Goal\leftarrow R(x,y),P(x),P(y).
\]
The fixed goal atom is $A_0=\Goal$.

The extensional database is built as follows. For every vertex $u\in V$, put the fact $P(u)$ into the endogenous database $\Dn$. 
For every undirected edge $\{u,v\}\in E$, choose one orientation, denoted
$(a,b)$, and put the fact $R(a,b)$ into the exogenous database $\Dx$. No other
facts are added.

The construction is polynomial in $|V|+|E|$, and the program $\PiP$ and goal $A_0$ do not depend on the input graph.

The constructed instance satisfies the promise in Definition~\ref{def:destroy}. 
Since $E\neq\emptyset$, choose an edge $\{u,v\}\in E$ and let $(a,b)$ be its
chosen orientation.
 Then $R(a,b)\in\Dx$, and both $P(a)$ and $P(b)$ belong to $\Dn$. Therefore, the ground rule instance
\[
  \Goal\leftarrow R(a,b),P(a),P(b)
\]
has all body atoms true in $\Dx\cup\Dn$, so $\PiP\cup(\Dx\cup\Dn)\models\Goal$.

We prove correctness of the reduction in both directions. Suppose first that $G$ has a vertex cover $C\subseteq V$ with $|C|\leq k$. Let
\[
  \Delta_C=\{\,P(u)\mid u\in C\,\}\subseteq\Dn.
\]
Then, $|\Delta_C|=|C|\leq k$. We show that deleting $\Delta_C$ destroys $\Goal$. Assume, toward a contradiction, that
\[
  \PiP\cup(\Dx\cup(\Dn\setminus\Delta_C))\models\Goal.
\]
The only rule with head $\Goal$ is the rule given above. Hence, some ground instance of that rule must have all body atoms true in the retained database. Thus, there are constants $a,b$ such that $R(a,b)\in\Dx$, $P(a)\in\Dn\setminus\Delta_C$, and $P(b)\in\Dn\setminus\Delta_C$.
The fact $R(a,b)$ was introduced from the undirected edge $\{a,b\}\in E$.
 Since $P(a)$ and $P(b)$ were not deleted, neither $a$ nor $b$ belongs to $C$. This contradicts that $C$ covers every edge of $G$. Therefore,
\[
  \PiP\cup(\Dx\cup(\Dn\setminus\Delta_C))\not\models\Goal,
\]
and the constructed $\Destroy(\PiP,A_0)$ instance is a yes-instance.

Conversely, suppose there is a deletion set $\Delta\subseteq\Dn$ with $|\Delta|\leq k$ such that
\[
  \PiP\cup(\Dx\cup(\Dn\setminus\Delta))\not\models\Goal.
\]
Let $C_\Delta=\{\,u\in V\mid P(u)\in\Delta\,\}$.
Since the facts $P(u)$ are in one-to-one correspondence with vertices $u\in V$, we have $|C_\Delta|=|\Delta|\leq k$. We show that $C_\Delta$ is a vertex cover. Assume not. 
Then some edge $\{u,v\}\in E$ has neither endpoint in $C_\Delta$.
 Let $(a,b)$ be the orientation of this edge used in the construction. Since neither endpoint belongs to $C_\Delta$, both $P(a)$ and $P(b)$ remain in $\Dn\setminus\Delta$. Also, $R(a,b)\in\Dx$ by construction. Hence, the ground instance
\[
  \Goal\leftarrow R(a,b),P(a),P(b)
\]
has all body atoms true in $\Dx\cup(\Dn\setminus\Delta)$. It follows that $\PiP\cup(\Dx\cup(\Dn\setminus\Delta))\models\Goal$, contradicting the choice of $\Delta$. Therefore, every edge of $G$ has an endpoint in $C_\Delta$, and $C_\Delta$ is a vertex cover of size at most $k$.

The two directions establish that $(G,k)$ is a yes-instance of \textsc{Vertex Cover} iff the constructed instance is a yes-instance of $\Destroy(\PiP,A_0)$. The reduction is polynomial, while $\PiP$ and $A_0$ are fixed. Hence, $\Destroy(\PiP,A_0)$ is NP-hard in data complexity.
\end{document}